\newcommand{\QuickXsort}{{\sc QuickXsort}}
\newcommand{\QuickXYsort}{{\sc QuickXYsort}}
\newcommand{\N}{\mathbb{N}}
\newcommand{\Z}{\mathbb{Z}}
\newcommand{\R}{\mathbb{R}}
\newcommand{\Merge}{\mbox{$\mathit{merge}$}}
\newcommand{\Mergeinsertion}{\mbox{$\mathit{mergeinsertion}$}}
\newcommand{\Mergeinsertionrecursive}{\mbox{$\mathit{mergeinsertionrecursive}$}}
\newcommand{\Link}{\mbox{$\mathit{join}$}}
\newcommand{\Construct}{\mbox{$\mathit{construct}$}}
\newcommand{\Binaryinsert}{\mbox{$\mathit{binary}${\rm -}$\mathit{insert}$}}
\newcommand{\Dancestor}{\mbox{$\mathit{d}$\mbox{\rm -}$\mathit{ancestor}$}}
\newcommand{\Dchild}{\mbox{$\mathit{d}$\mbox{\rm -}$\mathit{child}$}}
\newcommand{\Oh}{\mathcal{O}}
\newcommand{\oh}{o}
\newcommand{\floor}[1]{\left\lfloor\mathinner{#1} \right\rfloor}
\newcommand{\ceil}[1]{\left\lceil\mathinner{#1} \right\rceil}
\newcommand{\set}[2]{\left\{\, \mathinner{#1}\vphantom{#2}\: \left|\: \vphantom{#1}\mathinner{#2} \right.\,\right\}}
\renewcommand{\Pr}[1]{\mathop{\mathrm{Pr}}\left[\,#1\,\right]}
\providecommand{\DontPrintSemicolon}{\dontprintsemicolon}
\newtheorem{theorem}{Theorem}
\newtheorem{corollary}{Corollary}
\newtheorem{proposition}{Proposition}
\newtheorem{lemma}{Lemma}
\begin{document}

\title{QuickXsort: Efficient Sorting with 
       $n \log n - 1.399n +o(n)$ Comparisons on Average}
 \author{
   Stefan Edelkamp \\ 
   TZI, Universit{\"a}t Bremen, 
   Am Fallturm 1,  
   D-28239 Bremen, Germany\\ 
   \texttt{edelkamp@tzi.de}
\and
   Armin Wei\ss \\
   FMI, Universit{\"a}t Stuttgart, 
   Universit{\"a}tsstr.\ 38, 
   D-70569 Stuttgart, Germany\\ 
  \texttt{armin.weiss@fmi.uni-stuttgart.de}
}

\date{}

\maketitle
\thispagestyle{empty}

\begin{abstract}
In this paper we generalize the idea of {\sc QuickHeapsort} leading to
the notion of \QuickXsort. Given some external sorting algorithm~X,
\QuickXsort{} yields an internal sorting algorithm if~X
satisfies certain natural conditions.  

With 
{\sc QuickWeakHeapsort} and 
{\sc QuickMergesort} we present two examples for the \QuickXsort-construction.
Both are efficient algorithms that incur
approximately $n \log n - 1.26n +o(n)$ comparisons on the average. A worst
case of $n \log n + \Oh(n)$ comparisons can be achieved without significantly affecting
the average case.

Furthermore, we describe an implementation of {\sc
  MergeInsertion} for small $n$. Taking {\sc MergeInsertion}
as a base case for {\sc QuickMergesort}, we establish a worst-case
efficient sorting algorithm calling for $n \log n - 1.3999n + o(n)$
comparisons on average. 
{\sc QuickMergesort}
with constant size base cases shows the best performance on practical
inputs: when sorting integers it is slower by only 15\% to STL-{\sc Introsort}. 

\end{abstract}


\newpage

\setcounter{page}{1}
\section{Introduction}

Sorting a sequence of $n$ elements remains one of the most frequent tasks carried out by computers. A lower bound for sorting 
by only pairwise comparisons is
$\floor{\log n!} \approx n \log n - 1.44n + \Oh(\log n)$
comparisons for the worst
and average case (logarithms referred to by $\log$ are
base 2, 
the average case refers to a uniform distribution of all input permutations assuming all elements are different).

Sorting algorithms that are optimal
in the leading term are called \emph{constant-factor-optimal}. 
Table~\ref{compare} lists some milestones in the
race for reducing the coefficient in the linear term.
One of the most efficient (in terms of number of comparisons) {constant-factor-optimal} 
algorithms for solving the sorting problem is Ford
and Johnson's {\sc MergeInsertion} algorithm \cite{FordJ59}. It requires 
$n \log n - 1.329n + \Oh(\log n)$
comparisons in the worst case~\cite{Knu73}. {\sc MergeInsertion} has a severe
drawback that makes it uninteresting for practical issues: similar to
{\sc Insertionsort} the number of element moves is quadratic in $n$.
With {\sc Insertionsort} we mean the algorithm that inserts all elements
successively into the already ordered sequence finding the position for each
element by binary search (\emph{not} by linear search as mostly done).
However, {\sc MergeInsertion} and  {\sc Insertionsort} can be used to sort small
subarrays such that the quadratic running time for these subarrays is small in
comparison to the overall running time. 

Reinhardt~\cite{Reinhardt92} used this
technique to design an internal {\sc Mergesort} variant that needs in the worst
case $n \log n - 1.329n+ \Oh(\log n)$ comparisons.  Unfortunately, implementations of this {\sc InPlaceMergesort} algorithm have not been
documented. Katajainen
et al.'s~\cite{KatajainenPT96,ElmasryKS12} work
inspired by Reinhardt is practical, but the
number of comparisons is larger. 

Throughout the
text we avoid the terms \emph{in-place} or \emph{in-situ} and prefer
the term \emph{internal} (opposed to \emph{external}). We call an algorithm \emph{internal}
 if it needs at most $\Oh(\log n)$ space in addition to
the array to be sorted. That means we consider {\sc Quicksort} as an internal
algorithm whereas standard {\sc Mergesort} is external because it needs a linear
amount of extra space.

Based on {\sc QuickHeapsort} \cite{quickheap}, in this paper we
develop the concept of {\sc QuickXsort} and apply it to other sorting
algorithms as {\sc Mergesort} or {\sc WeakHeapsort}. This yields efficient internal sorting algorithms. The idea is very
simple: as in {\sc Quicksort} the array is partitioned into the
elements greater and less than some pivot element. Then one part of
the array is sorted by some algorithm X and the other part is sorted
recursively. The advantage of this procedure is that, if X is an external 
algorithm, then in {\sc
  QuickXsort} the part of the array which is not currently being sorted may be
used as temporary
space, what yields an internal variant of X. We show that under natural
assumptions {\sc
  QuickXsort} performs up to $\oh(n)$ terms on average the same number of
comparisons as X.

\begin{table}[t]
\vspace{-0.2cm}
\caption{Constant-factor-optimal sorting with $n \log n + \kappa n + o(n)$ comparisons.}\label{compare}
\vspace{-0.2cm}
\begin{center}
\begin{small}
\begin{tabular}{c|cc|cc|c} 
                                 &  Mem. & Other & {$\kappa$} Worst & {$\kappa$} Avg. & {$\kappa$} Exper. \\ \hline 
Lower bound                    & $\Oh(1)$ & $\Oh(n \log n)$  & -1.44 & -1.44 \\ \hline
{\sc BottomUpHeapsort} \cite{Weg93} & $\Oh(1)$ & $\Oh(n \log n)$ & $\omega(1)$ & -- &  [0.35,0.39] \\
{\sc WeakHeapsort} \cite{Dut93,EW00} & $\Oh(n/w)$  & $\Oh(n \log n)$ &  0.09 & -- & [-0.46,-0.42] \\
{\sc RelaxedWeakHeapsort} \cite{edelkampstiegeler} 
& $\Oh(n)$  & $\Oh(n \log n)$ &  -0.91 & -0.91 & -0.91 \\
{\sc Mergesort}~\cite{Knu73}   & $\Oh(n)$ & $\Oh(n \log n)$ & -0.91 & -1.26 & -- \\
{\sc ExternalWeakHeapsort} \# & $\Oh(n)$ & $\Oh(n \log n)$  &  -0.91 & -1.26* & -- \\ 
{\sc Insertionsort}~\cite{Knu73}  & $\Oh(1)$ & $\Oh(n^2)$ & -0.91 & -1.38 \#& -- \\
{\sc MergeInsertion}~\cite{Knu73}   & $\Oh(n)$ & $\Oh(n^2)$ &  -1.32 & -1.3999 \# & [-1.43,-1.41] \\
{\sc InPlaceMergesort} \cite{Reinhardt92}  & $\Oh(1)$ & $\Oh(n \log n)$ & -1.32 & -- & -- \\
{\sc QuickHeapsort}~\cite{quickheap,DiekertW13Quick} 
& $\Oh(1)$ & $\Oh(n \log n)$ & $\omega(1)$ & -0.03 &  $\approx$ 0.20 \\
& $\Oh(n/w)$ & $\Oh(n \log n)$ & $\omega(1)$ & -0.99 & $\approx$ -1.24 \\ \hline
{\sc QuickMergesort} (IS) \#  & $\Oh(\log n)$ & $\Oh(n \log n)$  &  -0.32  & -1.38 & -- \\
{\sc QuickMergesort} \#  & $\Oh(1)$ & $\Oh(n \log n)$  & -0.32  & -1.26 & [-1.29,-1.27] \\ 
{\sc QuickMergesort} (MI) \# ~ & $\Oh(\log n)$ & $\Oh(n \log n)$  &  -0.32  & -1.3999 & [-1.41,-1.40]\\ \hline
\end{tabular}
\end{small}
 \end{center}
%
Abbreviations: \# in this paper, MI MergeInsertion, -- not analyzed, * for $n=2^k$, 
$w$: computer word width in bits; 
we assume $\log n \in
  \Oh(n/w)$.

 For {\sc QuickXsort} we assume {\sc
     InPlaceMergesort} as a worst-case stopper (without  $ \kappa_{\mathrm{worst}} \in \omega(1)$).

\end{table}

The concept of {\sc QuickXsort} (without calling it like that) was
first applied in {\sc UltimateHeapsort} by Katajainen
\cite{Katajainen98}. In {\sc UltimateHeapsort}, first the
median of the array is determined, and then the array is
partitioned into subarrays of equal size. Finding the median means significant
additional effort. Cantone and Cincotti \cite{quickheap} weakened the
requirement for the pivot and designed {\sc QuickHeapsort} which uses
only a sample of smaller size to select the pivot for
partitioning. {\sc UltimateHeapsort} is inferior to {\sc
  QuickHeapsort} in terms of average case running time, although, unlike {\sc QuickHeapsort}, it
allows an $n\log n + \Oh(n)$ bound for the worst case number of comparisons. Diekert and 
Wei\ss~\cite{DiekertW13Quick} analyzed {\sc QuickHeapsort} more
thoroughly and showed that it needs less than $n\log n -0.99 n
+\oh(n)$ comparisons in the average case when implemented with
approximately $\sqrt{n}$ elements as sample for pivot selection and
some other improvements.

Edelkamp and Stiegeler \cite{edelkampstiegeler} applied the idea of {\sc QuickXsort} to {\sc WeakHeapsort} (which was first described by Dutton \cite{Dut93}) introducing {\sc QuickWeakHeapsort}. 
The worst case number
of comparisons of {\sc WeakHeapsort} is $n \lceil \log n \rceil -
2^{\lceil \log n \rceil} + n - 1 \le n \log n + 0.09n$, and, following
Edelkamp and Wegener \cite{EW00}, this bound is tight. 
In~\cite{edelkampstiegeler} an improved variant with $n \log n - 0.91n$
comparisons in the worst case and requiring extra space is
presented. With {\sc ExternalWeakHeapsort} we propose a further refinement 
with the same worst case bound, but in average requiring approximately
$n \log n - 1.26n$ comparisons. 
Using {\sc ExternalWeakHeapsort} as X in {\sc QuickXsort} we obtain an improvement over {\sc QuickWeakHeapsort} of \cite{edelkampstiegeler}.

As indicated above, {\sc Mergesort} is another good candidate to apply the {\sc QuickXsort}-construction. With {\sc QuickMergesort} we describe an internal variant of {\sc Mergesort} which not only in terms of number of comparisons is almost as good as {\sc Mergesort}, but also in terms of running time.
As mentioned before, {\sc MergeInsertion} can be used to sort small
subarrays. We study {\sc MergeInsertion} and provide an implementation
based on weak heaps. Furthermore, we give an average case analysis.  
When sorting small subarrays with
{\sc MergeInsertion}, we can show that the average number of
comparisons performed by {\sc
  Mergesort} is bounded by $n \log n - 1.3999n+ \oh(n)$, and, therefore,
{\sc QuickMergesort} uses at most $n \log n
- 1.3999n + \oh(n)$ comparisons in the average case.

\section{ \QuickXsort }\label{sec:quickXsort}

In this section we give a more precise description of {\sc QuickXsort} and derive some results concerning the number of comparisons performed in the average and worst case.
 Let X be some sorting
algorithm.
{\sc QuickXsort} works as follows:
First, choose some pivot element as median of some random sample. Next, partition the array according to this
pivot element, i.\,e., rearrange the array such that all elements left of the pivot are less or equal and all elements on the right are greater or equal than the pivot element.
Then, choose one part of
the array and sort it with algorithm X. (In general, it does not matter
whether the smaller or larger half of the array is chosen. However,
for a specific sorting algorithm X like Heapsort, there might be a
better and a worse choice.)  After one part of the array has been
sorted with X, move the pivot element to its correct position (right after/before the already sorted part) and sort the other
part of the array recursively with {\sc QuickXsort}.

The main advantage of this procedure is that the part of the array that  
 is not being sorted currently can be used as temporary memory for the
algorithm X. This yields fast \emph{internal} variants for
various \emph{external} sorting algorithms (such as {\sc Mergesort}). The idea
is that whenever a data element should be moved to the external storage, instead
it is swapped with some data element in the part of the array which is not
currently being sorted. Of course, this
works only, if the algorithm needs additional storage only for data
elements. Furthermore, the algorithm has to be able to keep track of the
positions of elements which have been swapped.  As the specific method depends on the algorithm X, we give some more details when we describe the examples for {\sc QuickXsort}.

For the number of comparisons we can derive some general results which hold for a wide class of algorithms X.
Under natural assumptions the average case number of comparisons of X and of \QuickXsort{} differs only by an
$o(n)$-term. For the rest of the paper, we assume
that the pivot is selected as the median of approximately $\sqrt{n}$
randomly chosen elements. 
Sample sizes of
approximately $\sqrt{n}$ are likely to be optimal as the results in
\cite{DiekertW13Quick,MartinezR01} suggest.

\begin{theorem} [\QuickXsort{}  Average-Case]\label{thm:quickXsort}
Let X be some sorting algorithm requiring at most $n \log n +
cn +o(n)$ comparisons in the average case. Then, \QuickXsort{}
implemented with $\Theta(\sqrt{n})$ elements as sample for pivot
selection is a sorting algorithm that also needs at most $n
\log n + cn +o(n)$ comparisons in the average case.
\end{theorem}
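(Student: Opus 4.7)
The plan is to set up a Quicksort-style recurrence for the expected number of comparisons $C(n)$ of \QuickXsort{} on a uniformly random permutation and to compare it termwise with an idealized ``pivot $=$ true median'' version. Let $K$ denote the random size of the subarray handed to~X. Partitioning costs $n-1$ comparisons, and selecting the median of the $\Theta(\sqrt{n})$-element pivot sample costs $M(n) = \Oh(\sqrt n \log n)$ (e.g.\ by sorting the sample, which is already $\oh(n)$), so the recurrence reads
\[
    C(n) \;=\; M(n) + (n-1) + \mathbb{E}[X(K)] + \mathbb{E}[C(n-1-K)].
\]
Writing the hypothesis on~X as $X(m) \le m\log m + cm + \varepsilon(m)$ with $\varepsilon(m)=\oh(m)$, I would prove by strong induction on~$n$ that $C(m) \le m\log m + cm + \delta(m)$ with $\delta(m)=\oh(m)$, which is exactly the desired conclusion.

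The middle step of the induction is to show that picking the median of a $\Theta(\sqrt n)$-sample is nearly as good as picking the true median. Classical order-statistic bounds give $\mathbb{E}[K] = (n-1)/2 + \Oh(1)$, $\operatorname{Var}(K) = \Oh(n^{3/2})$, and a Chernoff argument on the number of sample elements below a given rank gives that the probability of $\abs{K - (n-1)/2} > n/4$ is at most $\exp(-\Omega(\sqrt n))$. Setting $g(k) = k\log k + (n-1-k)\log(n-1-k)$, a direct calculation yields $g'((n-1)/2)=0$ and $g''(k) = \Oh(1/n)$ uniformly on $[n/4,\,3n/4]$, so a second-order Taylor expansion gives
\[
    \mathbb{E}\bigl[g(K)\bigr] \;=\; (n-1)\log\tfrac{n-1}{2} + \Oh\!\bigl(\operatorname{Var}(K)/n\bigr) + \oh(1) \;=\; (n-1)\log\tfrac{n-1}{2} + \Oh(\sqrt n),
\]
where the bad event contributes only $n\log n \cdot \exp(-\Omega(\sqrt n)) = \oh(1)$ after a crude bound. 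Combined with $c\,\mathbb{E}[K] + c\,\mathbb{E}[n-1-K] = c(n-1)$ and the elementary identity $(n-1)\log\tfrac{n-1}{2} + (n-1) = n\log n - \Theta(\log n)$, the recurrence simplifies to
\[
    \delta(n) \;\le\; M(n) + \mathbb{E}[\varepsilon(K)] + \mathbb{E}[\delta(n-1-K)] + \Oh(\sqrt n).
\]

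The principal obstacle is closing the induction: showing that the $\oh(\cdot)$-errors accumulated across the whole recursion still sum to $\oh(n)$. Here I would unroll the recursion tree. Because pivots concentrate so tightly, the recursion has depth $\Oh(\log n)$ with overwhelming probability, each call has exactly one recursive descendant, and every input element is sorted by~X at exactly one level, so the sizes $K_i$ handed to~X satisfy $\sum_i K_i \le n$ along the spine. For any $\eta>0$, pick $M$ so large that $\varepsilon(m) \le \eta m$ whenever $m \ge M$: the calls with $K_i \ge M$ contribute at most $\eta\sum_i K_i \le \eta n$ to $\sum_i \varepsilon(K_i)$, while the at-most-$\Oh(\log n)$ calls with $K_i < M$ contribute only $\Oh(M\log n)$. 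Letting $\eta \to 0$ sufficiently slowly with $n$ yields $\mathbb{E}\bigl[\sum_i \varepsilon(K_i)\bigr] = \oh(n)$. The sampling overheads $M(n_i)$ telescope geometrically along the spine into $\Oh(M(n)) = \Oh(\sqrt n \log n) = \oh(n)$, and the rare deep-recursion event contributes only an exponentially damped $\Oh(n\log n) = \oh(1)$. Substituting these bounds back into the inductive step gives $\delta(n) = \oh(n)$, and hence $C(n) = n\log n + cn + \oh(n)$ as claimed.
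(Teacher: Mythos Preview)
Your approach is essentially correct but genuinely different from the paper's, and a little loose in two places.

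\textbf{Comparison with the paper.} The paper does \emph{not} use a second-moment / Taylor argument. Instead it fixes a slowly shrinking window $M=[n(\tfrac12-\delta(n)),\,n(\tfrac12+\delta(n))]$ with $\delta(n)\in o(1)\cap\Omega(n^{-1/4+\epsilon})$, invokes a deterministic convexity bound $f(k)\le f(\beta)$ for $k\in M$ on $f(x)=x\log x+(n-x)\log(n-x)$, and bounds the bad case $k\notin M$ crudely by $T(n)$ itself. This produces an inequality of the form $T(n)\le (1-p(n))(\,\cdots\,)+p(n)T(n)+\cdots$, which is then solved for $T(n)$; the $o(n)$ error $t(n)$ is defined implicitly by a recurrence $t(n)\ge t(3n/4)+\text{(sublinear terms)}$ and shown to be $o(n)$ afterwards. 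Your route---variance bound on the pivot rank, Taylor expansion of $g$, and unrolling the recursion spine---is a legitimate alternative that yields a slightly sharper intermediate estimate (you get $\Oh(\sqrt n)$ from $\operatorname{Var}(K)/n$, whereas the paper's window argument gives $\Oh(n\,\delta(n))$ with $\delta(n)$ chosen just barely $o(1)$). The paper's approach is more self-contained (one lemma, one convexity fact), yours is more quantitative.

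\textbf{Two loose spots.} First, the phrase ``prove by strong induction that $C(m)\le m\log m+cm+\delta(m)$ with $\delta(m)=o(m)$'' is not an induction hypothesis: $\delta$ is not specified, so there is nothing to assume at smaller $m$. What you actually do---and should say outright---is \emph{define} $\delta(n):=C(n)-n\log n-cn$, derive the recurrence you wrote, and then bound it directly by unrolling the spine; no induction is needed. Second, your bad-event accounting is done only at the top level. When you unroll, a bad pivot can occur at every level $i$, with probability $\exp(-\Omega(\sqrt{n_i}))$ depending on the random current size $n_i$, and a run of bad pivots can make the spine longer than $\Theta(\log n)$; your blanket ``$n\log n\cdot\exp(-\Omega(\sqrt n))$'' does not cover this. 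The fix is standard (and is essentially the paper's Proposition~\ref{prop:worstunlikely}): stop tracking concentration once $n_i<\sqrt n$ and bound the tail trivially by $n$, and for $n_i\ge\sqrt n$ union-bound the at most $\Oh(\log n)$ levels each with failure probability $\le\exp(-\Omega(n^{1/4}))$. With these two clarifications your argument goes through.
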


For the proofs we assume that the arrays are
indexed starting with $1$. The following lemma is crucial for our
estimates. It can be derived by applying Chernoff bounds or by direct elementary
calculations.
\begin{lemma}[{\cite[Lm.\ 2]{DiekertW13Quick}}]\label{lm:prob_bound}
Let $0<\delta < \frac{1}{2}$. If we choose the pivot as median of $2\gamma +1$ elements such that $2\gamma +1 \leq\frac{n}{2}$, then we have $\Pr{\text{pivot }\leq \frac{n}{2} - \delta n} < (2\gamma+1) \alpha^\gamma$ where $\alpha = 4\left(\frac{1}{4} - \delta^2\right)<1$.
\end{lemma}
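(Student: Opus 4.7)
The plan is to reinterpret the event $\{\text{pivot} \le n/2 - \delta n\}$ as a tail event for a sum of indicators and then bound that tail by the elementary binomial estimate familiar from median-of-samples analyses.

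First, I would observe that the chosen pivot is the $(\gamma+1)$-st order statistic of the $2\gamma+1$ sampled elements. Hence its rank in the underlying array of $n$ elements lies at or below $\lfloor n/2 - \delta n\rfloor$ precisely when at least $\gamma+1$ of the samples fall into the ``low'' set $L := \{1,\dots,\lfloor n/2 - \delta n\rfloor\}$. Write $M := \lvert L \rvert$ and $p := M/n \le 1/2 - \delta$, and let $S$ count how many of the $2\gamma+1$ samples land in $L$. Since the samples are drawn without replacement, $S$ is hypergeometric, but a classical comparison result of Hoeffding bounds every upper-tail probability of $S$ by the corresponding tail of the binomial $Y \sim \mathrm{Bin}(2\gamma+1,\,p)$; the hypothesis $2\gamma+1 \le n/2$ keeps us in the regime where this comparison is clean. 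So it suffices to prove $\Pr{Y \ge \gamma+1} \le (2\gamma+1)\,\alpha^\gamma$.

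For the binomial tail, write
\begin{equation*}
\Pr{Y \ge \gamma+1} \;=\; \sum_{k=\gamma+1}^{2\gamma+1}\binom{2\gamma+1}{k}\,p^k(1-p)^{2\gamma+1-k}.
\end{equation*}
Because $p < 1/2$, the ratio of consecutive terms equals $\frac{p}{1-p}\cdot\frac{2\gamma+1-k}{k+1}<1$, so every summand is dominated by the $k=\gamma+1$ term, namely $p^{\gamma+1}(1-p)^\gamma$. Bounding the sum of binomial coefficients crudely by $\sum_{k=\gamma+1}^{2\gamma+1}\binom{2\gamma+1}{k} \le 2^{2\gamma+1}$ yields
\begin{equation*}
\Pr{Y \ge \gamma+1} \;\le\; 2^{2\gamma+1}\,p^{\gamma+1}(1-p)^\gamma \;=\; 2p\,\bigl(4p(1-p)\bigr)^{\gamma}.
\end{equation*}

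Finally, since $p \le 1/2-\delta$ and the quadratic $4x(1-x)$ is maximized at $x=1/2$, we have $4p(1-p) \le 4(1/4-\delta^2) = \alpha$; combined with the trivial estimate $2p \le 1 - 2\delta < 1 \le 2\gamma+1$, this delivers the stated inequality. The only genuinely delicate step I anticipate is the hypergeometric-to-binomial upper-tail reduction; everything after is a one-line binomial-coefficient manipulation, and the final $\alpha$ falls out from factoring $4p(1-p)$.
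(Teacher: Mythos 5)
Your reduction of the event to the upper tail of the sample count $S$ and your treatment of the binomial tail are both fine; indeed your computation gives $\Pr{Y\ge \gamma+1}\le 2^{2\gamma+1}p^{\gamma+1}(1-p)^{\gamma}=2p\bigl(4p(1-p)\bigr)^{\gamma}\le \alpha^{\gamma}$, which is even sharper than the stated bound, so the factor $2\gamma+1$ is not needed at all. (The paper itself does not prove this lemma; it cites \cite{DiekertW13Quick} and remarks that it follows from Chernoff bounds or a direct elementary calculation, so that hint is all there is to compare against.) The genuine gap is exactly the step you flag as delicate: there is no classical theorem of Hoeffding bounding \emph{every} upper-tail probability of the hypergeometric $S$ by the corresponding binomial tail. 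Hoeffding's 1963 comparison states that sampling without replacement is dominated by sampling with replacement in the convex order, i.e.\ $E[f(S)]\le E[f(Y)]$ for convex $f$; this transfers moment-generating-function (Chernoff) bounds, not pointwise tails. Pointwise domination is in fact false in general, even under $2\gamma+1\le n/2$: with $n=6$, $\abs{L}=1$ and a sample of $2\gamma+1=3$ elements drawn without replacement, $\Pr{S\ge 1}=1/2$, whereas $Y\sim\mathrm{Bin}(3,1/6)$ has $\Pr{Y\ge 1}=1-(5/6)^{3}<0.43$. Whether domination happens to hold at your particular threshold $\gamma+1$ would itself require an argument, which you do not supply.

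The fix is short and stays within your framework: compare exponential moments instead of tails. For $\lambda>0$, Markov's inequality gives $\Pr{S\ge\gamma+1}\le e^{-\lambda(\gamma+1)}E[e^{\lambda S}]\le e^{-\lambda(\gamma+1)}E[e^{\lambda Y}]=e^{-\lambda(\gamma+1)}\bigl(1-p+pe^{\lambda}\bigr)^{2\gamma+1}$, where the middle inequality is now a legitimate application of Hoeffding's result. Choosing $\lambda=\ln\frac{1-p}{p}$ yields exactly $2^{2\gamma+1}p^{\gamma+1}(1-p)^{\gamma}=2p\bigl(4p(1-p)\bigr)^{\gamma}$, after which your final paragraph (monotonicity of $4x(1-x)$ on $[0,\tfrac12]$ and $2p\le 1-2\delta<1\le 2\gamma+1$) applies verbatim and gives the claimed strict inequality. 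Alternatively, one can avoid the binomial comparison altogether and estimate the hypergeometric probabilities directly, which is the ``direct elementary calculation'' the paper alludes to.
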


\begin{proof}[Proof of \prettyref{thm:quickXsort}]
Let $T(n)$ denote the average number of comparisons performed by  \QuickXsort{}  on an input array of length $n$ and let $S(n)=n \log n + cn +s(n)$ with $s(n) \in o(n)$ be an upper bound for the average number of comparisons performed by the algorithm~X on an input array of length $n$. Without loss of generality we may assume that $s(n)$ is monotone.
We are going to show by induction that $$T(n) \leq n \log n + cn +t(n)$$
for some monotonically increasing $t(n) \in o(n)$ with $s(n) \leq
t(n)$ which we will specify later. %

Let $\delta(n) \in o(1) \cap \Omega(n^{-\frac{1}{4} + \epsilon})$ with
$1/n\leq \delta(n) \leq 1/4$, i.\,e., $\delta$ is some function tending slowly
to zero for $n\to \infty$. 
Because of $\delta(n) \in \Omega(n^{-\frac{1}{4} + \epsilon})$, we see that
$(2\gamma+1) \left(1 - 4\delta^2\right)^\gamma$ tends to zero if $\gamma \in
\Theta(\sqrt{n})$.
Hence, by \prettyref{lm:prob_bound} it follows that the probability that the
pivot is more than $n \cdot \delta(n)$ off the median $p(n) = \Pr{\text{pivot }
< n\left(\frac{1}{2} -\delta(n)\right)}+\Pr{\text{pivot }> n\left(\frac{1}{2}
+\delta(n)\right)}$ tends to zero for $n\to \infty$. 
In the following we write $ M= \left[n\left(\frac{1}{2}
-\delta(n)\right),n\left(\frac{1}{2} +\delta(n)\right)\right] \cap \N$ and
$\overline M = \{1,\dots,n\}\setminus M$. We obtain the following recurrence
relation:
\begin{align*}
T(n) &\leq n - 1 + T_{\mathrm{pivot}}(n)\\ &\qquad + \sum_{k=1}^n \Pr{\!\text{pivot }= k\!}\cdot\max\left\{\, T(k-1) + S(n-k), T(n-k)+S(k-1)\,\right\}\\
 &\leq n - 1 + T_{\mathrm{pivot}}(n) \\
 & \qquad+ \Pr{\!\text{pivot }\in M \!}\cdot\max_{k\in M}\left\{\, T(k) + S(n-k), T(n-k)+S(k)\,\right\}\\
  &\qquad + \Pr{\!\text{pivot }\in \overline M \!}\cdot\max_{k\in\overline
M}\left\{\, T(k) + S(n-k), T(n-k)+S(k)\,\right\}.
\end{align*}
The function $f(x) = x \log x + (n-x)\log(n-x)$, $f(0) = f(n) = n\log n$ has its
only minimum in the interval $[0,n]$ at $x = n/2$, i.\,e., for $0 < x < n/2$ it
decreases monotonically and for $n/2 < x < n$ it increases monotonically. We set
$\beta = \left(\frac{n}{2} + n\cdot\delta(n) \right)$. That means that we have
$f(x) \leq f (\beta)$ for $x\in M$ and  $f(x) \leq f(n)$ for $x\in
\overline{M}$. 
Using this observation, the induction hypothesis, and our assumptions, we
conclude
\begin{align*}
&\max_{k\in M}\left\{\, T(k) + S(n-k), T(n-k)+S(k)\,\right\}\\
&\qquad\leq \max\set{ f(k)+  cn + t(k) + s(n-k)}{k\in M} \leq f (\beta) + cn +
t(\beta) + s(n),\\\\
&\max_{k\in\overline M}\left\{\, T(k) + S(n - k), T(n-k)+S(k)\,\right\}\\
&\qquad\leq \max\set{f(k) +  cn + t(k) + s(n-k) }{k\in \overline
M}
\leq T(n) + s(n).
\end{align*}
With $p(n)$ as above we obtain:
\begin{align}
\begin{split}\label{eq:tn}
T(n)
 &\leq n - 1 + T_{\mathrm{pivot}}(n) + p(n) \cdot T(n) + s(n)\\
 &\qquad + (1-p(n)) \bigl(f(\beta) + cn + t(\beta)\bigr).
 \end{split}
\end{align}
We subtract $p(n) \cdot T(n)$ on both sides and then divide by $1- p(n)$. Let
$D$ be some constant such that $D \geq 1/(1-p(n))$ for all $n$ (which exists
since $p(n)\neq 1$ for all $n$ and $p(n)\to 0$ for $n\to \infty$). Then, we
obtain
\begin{align*}
T(n)
&\leq \left(1 + D\cdot p(n)\right)\cdot (n - 1) + D\cdot( T_{\mathrm{pivot}}(n) + s(n)) +  \bigl(f(\beta) + cn + t(\beta)\bigr)\\
 &\leq \left(1 + D\cdot p(n)\right)\cdot(n - 1) + D\cdot( T_{\mathrm{pivot}}(n) + s(n)) \\
 &\qquad + \left(\frac{n}{2} - n\cdot\delta(n) \right)\cdot(\log (n/2) + \log(1+2\delta(n)))\\
 &\qquad + \left(\frac{n}{2} + n\cdot\delta(n) \right) \cdot(\log (n/2) + \log(1+2\delta(n))) + cn + t(3n/4)\\
 &\leq
 n\log n + cn\\
  &\qquad +  \left(D\cdot p(n) + 2\cdot\delta(n)/\ln 2\right) \cdot n + D\cdot (
T_{\mathrm{pivot}}(n) +s(n)) + t(3n/4),
\end{align*}
where the last inequality follows from $ \log(1+x) = \ln(1+x)/\ln(2) \leq 
x/\ln(2)$ for $x\in  \R_{> 0}$.  We see that $T(n) \leq n \log n + cn +t(n)$
if $t(n)$ satisfies the inequality
\[\left(D\cdot p(n) + 2\cdot \delta(n)/\ln 2\right) \cdot n +
D\cdot\left(T_{\mathrm{pivot}}(n) + s(n)\right) + t(3n/4)\leq t(n).\]
We choose $t(n)$ as small as possible. Inductively, we can show that for every
$\epsilon$ there is some $D_\epsilon$ such that $t(n) < \epsilon n +
D_\epsilon$. Hence, the theorem follows.
\end{proof}

Does \QuickXsort{} provide a
good bound for the worst case? The obvious answer is ``no''. If always
the $\sqrt{n}$ smallest elements are chosen for pivot selection, a
running time of $\Theta(n^{3/2})$ is obtained. However, we can
prove that such a worst case is very unlikely. In fact, let
$R(n)$ be the worst case number of comparisons of the algorithm X.
\prettyref{prop:worstunlikely} states that the probability that
\QuickXsort{} needs more than $R(n) + 6n$ comparisons decreases
exponentially in $n$. (This bound is not tight, but since we do not aim for exact
probabilities, \prettyref{prop:worstunlikely} is enough for us.)

\begin{proposition}\label{prop:worstunlikely}
Let $\epsilon > 0$. The probability that \QuickXsort{} needs more than
$R(n) + 6n$ comparisons is less than $(3/4+ \epsilon)^{\sqrt[4]{n}}$
for $n$ large enough.
\end{proposition}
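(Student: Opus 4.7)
The plan is to separate the argument into a deterministic cost bound conditional on \emph{good} pivots, and a tail bound on the probability that all pivots are good. Call the pivot at a recursion step operating on a subarray of size $m$ \emph{good} if neither resulting piece exceeds $3m/4$. Applying \prettyref{lm:prob_bound} with $\delta = 1/4$ gives $\alpha = 3/4$, so a median of $2\gamma+1 = \Theta(\sqrt m)$ samples is bad with probability at most $(2\gamma+1)(3/4)^{\gamma}$.

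I split the recursion into \emph{Phase 1}, consisting of levels where the current subarray size satisfies $n_i \geq \sqrt n$, and \emph{Phase 2}, consisting of all subsequent levels. First I would show that if every Phase 1 pivot is good, then the total comparison count is at most $R(n)+6n$. Good Phase 1 pivots force $n_{i+1}\leq \tfrac34 n_i$, so the Phase 1 subarray sizes form a geometric series with $\sum_{\text{Phase 1}} n_i \leq 4n$; in Phase 2, each $n_i \leq \sqrt n$ and every step removes at least the pivot, so there are at most $\sqrt n$ Phase 2 levels and $\sum_{\text{Phase 2}} n_i \leq n$. The total pivot-selection cost summed over all levels is $\Oh(\sqrt n \log n) = \oh(n)$ since sample sizes decrease geometrically. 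For the calls to X, letting $m_i$ denote the size of the piece handed to X at level $i$, we have $\sum m_i \leq n$, and the inequality $a\log a + b\log b \leq (a+b)\log(a+b)$ makes $R(n)=n\log n+\Oh(n)$ subadditive, so $\sum R(m_i) \leq R(n)$. Combining, the total is at most $R(n)+5n+\oh(n) \leq R(n)+6n$ for $n$ large.

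Next I bound the probability of a bad Phase 1 pivot. Phase 1 lasts at most $\lceil \log_{4/3}\sqrt n\rceil = \Oh(\log n)$ levels, and at each such level the sample parameter $\gamma_i = \Theta(\sqrt{n_i})$ satisfies $\gamma_i \geq \sqrt[4]{n}$. A union bound over Phase 1 levels together with \prettyref{lm:prob_bound} yields
\[
\Pr{\text{some Phase 1 pivot is bad}} \;\leq\; \Oh(\sqrt n \log n) \cdot (3/4)^{\sqrt[4]{n}}.
\]
For any $\epsilon > 0$ the polynomial prefactor is eventually dominated by $\bigl((3/4+\epsilon)/(3/4)\bigr)^{\sqrt[4]{n}}$, so the right-hand side is at most $(3/4+\epsilon)^{\sqrt[4]{n}}$ for $n$ large enough, giving the claimed probability bound.

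The main obstacle is the calibration of the phase boundary so that both halves of the argument align. The threshold $n_i = \sqrt n$ is essentially forced: a threshold of $n^{1-\eta}$ would allow up to $n^{1-\eta}$ Phase 2 levels of size up to $n^{1-\eta}$ each, contributing $\Oh(n^{2-2\eta})$ in partitioning, which is $\Oh(n)$ only for $\eta \geq 1/2$. Conveniently, at this threshold the Phase 1 sample sizes are of order $\sqrt[4]{n}$, matching the exponent in the target probability bound. The subadditivity of $R$ is a minor technicality, immediate for the $n\log n+\Oh(n)$ worst-case bounds of all the algorithms arising as X in \QuickXsort{}.
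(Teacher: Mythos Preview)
Your proposal is correct and follows essentially the same approach as the paper: define good pivots as those landing in $[m/4,3m/4]$, split the recursion at the threshold $n_i=\sqrt n$, bound the deterministic cost by $R(n)+6n$ under the event that all Phase~1 pivots are good (geometric series for partitioning, crude quadratic bound for Phase~2, and $R(n)$ for the X calls), and then bound the failure probability via \prettyref{lm:prob_bound} with $\delta=1/4$ over the $\Oh(\log n)$ Phase~1 levels. The only cosmetic differences are that the paper uses a product and Bernoulli's inequality where you use a union bound, and that you explicitly invoke subadditivity of $R$ for the X-cost whereas the paper simply asserts $\sum R(m_i)\leq R(n)$; your justification here is actually the more careful of the two.
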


\begin{proof}
Let $n$ be the size of the input. We say that we are in a \emph{good} case if an
array of size $m$ is partitioned in the interval $[m/4,\, 3m/4]$, i.\,e., if the
pivot is chosen in that interval. We can obtain a bound for the desired
probability by estimating the probability that we always are in such a good case
until the array contains only $\sqrt{n}$ elements. For smaller arrays, we can
assume an upper bound of $\sqrt{n}^2 = n$ comparisons for the worst case.
For all partitioning steps that sums up to less than $n\cdot\sum_{i \geq
0}(3/4)^i =
4n$ comparisons if we are always in a good case. We also have to consider the
number of comparisons required to find the pivot element. At any stage the
pivot is chosen as median of at most $\sqrt{n}$ elements. Since the median can
be determined in linear time, for all stages together this sums up to less than
$n$ comparisons if we are always in a good case and $n$ is large enough.
Finally, for all the sorting phases with
X we need at most $R(n)$ comparisons in total (that is only a rough upper bound
which can be improved as in the proof of \prettyref{thm:quickXsort}). Hence, we
need at most $R(n) + 6n$ comparisons if always a good case occurs.

Now, we only have to estimate the probability that always a good case occurs.
 By \prettyref{lm:prob_bound}, the probability for a good case in the first partitioning step is at least $1-d\cdot\sqrt{n}\cdot\left(3/4\right)^{\sqrt{n}}$ for some constant $d$. 
We have to choose $\log(\sqrt{n})/\log(3/4)<1.21\log n$ times a pivot in the interval $[m/4,\, 3m/4]$, then the array has size less than $\sqrt{n}$. We  only have to consider partitioning steps where the array has size greater than $\sqrt{n}$ (if the size of the array is already less than $\sqrt{n}$ we define the probability of a good case as $1$). Hence, for each of these partitioning steps we obtain that the probability for a good case is greater than $1-d\cdot\sqrt[4]{n}\cdot\left(3/4\right)^{\sqrt[4]{n}}$. 
Therefore, we obtain
\begin{align*}\Pr{\text{always good case}}&\geq\left(1-d\cdot\sqrt[4]{n}\cdot\left(3/4\right)^{\sqrt[4]{n}}\right)^{1.21\log(n)}\\
&\geq 1 - 1.21\log(n) \cdot d\cdot\sqrt[4]{n}\cdot\left(3/4\right)^{\sqrt[4]{n}}
\end{align*}
by Bernoulli's inequality. For $n$ large enough we have $1.21\log(n) \cdot
d\cdot\sqrt[4]{n}\cdot\left(3/4\right)^{\sqrt[4]{n}}\leq (3/4+
\epsilon)^{\sqrt[4] {n}}$.
\end{proof}
To obtain a provable bound for the worst case complexity we
apply a simple trick.  We fix some worst case efficient sorting
algorithm Y. This might be, e.\,g., {\sc InPlaceMergesort}. Worst case efficient means that we
have a $n\log n + \Oh(n)$ bound for the worst case number of
comparisons.  We choose some slowly decreasing function $\delta(n) \in
o(1) \cap \Omega(n^{-\frac{1}{4} + \epsilon})$, e.\,g., $\delta(n) =
1/ \log n$. Now, whenever the pivot is more than $n\cdot\delta(n)$
off the median, we switch to the algorithm Y. We call this \QuickXYsort{}.
To achieve a good worst case bound, of course, we also need a
good bound for algorithm X. W.\,l.\,o.\,g.\ we assume the
same worst case bounds for X as for Y.  Note that
\QuickXYsort{} only makes sense if one needs a provably good worst case 
bound. Since \QuickXsort{} is always expected to make at most as
many comparisons as \QuickXYsort{} (under the reasonable assumption
that X on average is faster than Y -- otherwise one would use simply Y), in every step of the recursion
\QuickXsort{} is the better choice for the average case.

In order to obtain an efficient internal sorting algorithm, of course, Y has to be internal and X using at most $n$ extra spaces for an array of size $n$.

\begin{theorem}[\QuickXYsort{} Worst-Case]\label{thm:QuickXYsort}
Let X be a sorting algorithm with at most $n \log n + cn +o(n)$
comparisons in the average case and $R(n) = n \log n + dn + o(n)$
comparisons in the worst case ($d \geq c$). Let Y be a sorting
algorithm with at most $R(n)$ comparisons in the worst case.  Then,
\QuickXYsort{} is a sorting algorithm that performs at most
$n \log n + cn + o(n)$ comparisons in the average case and $n \log n +
(d+1) n + o(n)$ comparisons in the worst case.
\end{theorem}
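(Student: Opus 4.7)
The plan is to treat the worst case and the average case separately, both by straightforward induction, reusing the framework of the proof of \prettyref{thm:quickXsort}.

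For the worst case, let $W(n)$ denote the worst-case number of comparisons of \QuickXYsort{}. At the top of the recursion we spend $n-1$ comparisons for partitioning and $T_{\mathrm{pivot}}(n)\in o(n)$ for selecting the pivot. Two branches can occur: either the pivot lies in $M=[n(\tfrac12-\delta(n)),\,n(\tfrac12+\delta(n))]$, in which case we recurse on one side and sort the other side (of size between $n(\tfrac12-\delta(n))$ and $n(\tfrac12+\delta(n))$) with~X at a cost of at most $R(\cdot)$; or the pivot lies outside $M$, in which case \QuickXYsort{} invokes~Y on the entire array at a cost of at most $R(n)$. I would prove by induction that $W(n)\leq n\log n+(d+1)n+w(n)$ for a suitable monotone $w(n)\in o(n)$. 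For the ``bad'' branch this is immediate: $n-1+T_{\mathrm{pivot}}(n)+R(n)\leq n\log n+(d+1)n+o(n)$, and the extra linear summand $n$ is exactly the source of the $+1$ in the constant. For the ``good'' branch I would apply, as in the proof of \prettyref{thm:quickXsort}, the estimate $k\log k+(n-k)\log(n-k)\leq n\log n-n+o(n)$ for $k\in M$ together with $k\leq n/2+n\delta(n)$, which yields a bound of the form $n\log n+(d+\tfrac12)n+w(n)+o(n)$ and therefore fits inside $n\log n+(d+1)n+w(n)$ for $n$ large enough. Taking the maximum of the two branches completes the induction.

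For the average case, let $T(n)$ denote the average number of comparisons of \QuickXYsort{}. The recurrence is exactly the one appearing in the proof of \prettyref{thm:quickXsort}, except that on the event $\{\text{pivot}\in\overline M\}$ the cost is now the deterministic bound $R(n)$ rather than $T(n)$ plus an X-sort of the other side. I would again show by induction that $T(n)\leq n\log n+cn+t(n)$ for some monotone $t(n)\in o(n)$ with $t(n)\geq s(n)$, where $s(n)$ is the $o(n)$-remainder in the average cost of~X. Split the sum over pivot positions $k$ into $k\in M$ and $k\in\overline M$ exactly as before. The contribution from $k\in M$ is handled verbatim as in \prettyref{thm:quickXsort}, using $f(\beta)\leq n\log n-n+o(n)$ and absorbing $s(n)$ and $t(\beta)$ into $t(n)$. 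The contribution from $k\in\overline M$ is now
\[
p(n)\cdot\bigl(n-1+T_{\mathrm{pivot}}(n)+R(n)\bigr)\leq p(n)\cdot\bigl(n\log n+(d+1)n+o(n)\bigr).
\]

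The central estimate to check is $p(n)\cdot n\log n\in o(n)$: by \prettyref{lm:prob_bound} together with $\delta(n)\in\Omega(n^{-1/4+\epsilon})$ and $\gamma\in\Theta(\sqrt n)$, the probability $p(n)$ decays faster than any polynomial in $1/n$, so in particular $p(n)\log n\to 0$. Hence the entire $\overline M$-contribution is $o(n)$ and can be absorbed into $t(n)$, so the induction closes with the same $t(n)$ as in \prettyref{thm:quickXsort}.

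The main obstacle is purely bookkeeping: verifying that the additional linear term contributed by the ``bad'' worst-case branch is exactly $n$ (hence the $+1$ in the statement, not $+2$), and verifying that in the average-case recurrence the switch to~Y costs only $o(n)$ on expectation. Both reduce to the tail bound of \prettyref{lm:prob_bound}, which already powered the proof of \prettyref{thm:quickXsort}; no new ideas are required.
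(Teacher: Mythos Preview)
Your proposal is correct and follows essentially the same approach as the paper: split the worst case into the ``bad'' branch (pivot outside $M$, switch to $Y$, pay $n-1+T_{\mathrm{pivot}}(n)+R(n)=n\log n+(d+1)n+o(n)$) and the ``good'' branch (induction using $f(\beta)\leq n\log n-n+o(n)$), and for the average case replace the $T(n)$ on the right-hand side of the Theorem~\ref{thm:quickXsort} recurrence by $R(n)$ and absorb $p(n)\cdot R(n)$ into $o(n)$. One small bookkeeping slip: in your average-case paragraph you count $n-1+T_{\mathrm{pivot}}(n)$ once outside the branching and then again inside the $\overline M$-contribution; this is harmless for an upper bound but worth cleaning up.
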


\begin{proof}
Since the proof is very similar to the proof of \prettyref{thm:quickXsort}, we provide only a sketch. By replacing $T(n)$ by $R(n) = n \log n + dn + r(n)$ with $r(n) \in o(n)$ in the right side of \prettyref{eq:tn} in the proof of \prettyref{thm:quickXsort} we obtain for the average case:
\begin{align*}
T_{\mathrm{av}}(n)
 &\leq (n - 1) + T_{\mathrm{pivot}}(n)+ s(n) + p(n)\cdot( n\log n + dn + r(n) )\\
 &\qquad  + (1-p(n))\cdot\left( \left(\frac{n}{2} -n\cdot\delta(n)\right)\cdot(\log( n/2) + \log(1+2\delta(n)) + c)  \right.\\
 &\qquad\qquad \left. + \left(\frac{n}{2} + n\cdot\delta(n) \right)\cdot(\log (n/2) + \log(1+2\delta(n)) + c) + t(3n/4)\right)\\
  &\leq n\log n + cn + T_{\mathrm{pivot}}(n)\\
  &\qquad + \left(p(n) \cdot(dn+r(n)) + 2\delta(n)/\ln 2\right) \cdot n +  s(n)
+ t(3n/4).
\end{align*}
As in \prettyref{thm:quickXsort} the statement for the average case follows.

For the worst case, there are two possibilities: either the algorithm already
fails the condition $ \mathrm{pivot} \in \left[n\left(\frac{1}{2}
-\delta(n)\right),\, n\left(\frac{1}{2} +\delta(n)\right)\right]$ in the first
partitioning step or it does not. In the first case, it is immediate that we
have a worst case bound of $n \log n + dn + n + o(n)$, which also is tight. Note
that we assume that we can choose the pivot element in time $o(n)$ which is no
real restriction, since the median of $\Theta(\sqrt{n})$ elements can be found
in $\Theta(\sqrt{n})$ time. In the second case, we assume by induction that
$T_{\mathrm{worst}}(m) \leq m \log m + dm + m + u(m)$ for $m< n$ for some
$u(m)\in o(m)$ and obtain a recurrence relation similar to \prettyref{eq:tn} in
the proof of \prettyref{thm:quickXsort}:
\begin{align*}
\begin{split}\label{eq:tn}
T_{\mathrm{worst}}(n)
 &\leq n - 1 + T_{\mathrm{pivot}}(n)  +  R\left(\frac{n}{2}
-n\cdot\delta(n)\right) + T_{\mathrm{worst}}\left( \frac{n}{2}
+n\cdot\delta(n)\right).
 \end{split}
\end{align*}
By the same arguments as above the result follows.
\end{proof}

\section{{\sc QuickWeakHeapsort}}\label{sec:QWHS}

In this section consider {\sc QuickWeakHeapsort} as a first example of \QuickXsort{}. We start by introducing weak heaps and then continue by describing {\sc WeakHeapsort} and a novel external
version of it. This external version is a good candidate for \QuickXsort{} and yields an efficient sorting algorithm that uses approximately
$n\log n - 1.2n$ comparisons (this value is only a rough estimate and neither a bound from below nor above). A drawback of  {\sc WeakHeapsort} and its variants is that they require one
extra bit per element. The exposition also serves as an  
intermediate step towards our implementation of {\sc MergeInsertion}, where the weak-heap data structure will
be used as a building block.

\begin{figure}[t!]
\begin{center}
\input{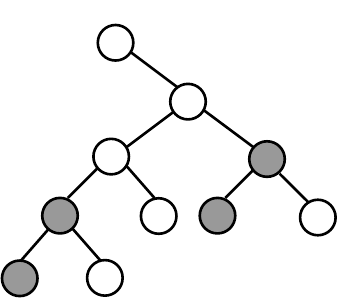_t}
\end{center}
\caption{
  A weak heap 
  (reverse bits are set for
  grey nodes, above the nodes are array
  indices.)\label{fig:example}}
\end{figure}

Conceptually, a \emph{weak heap} (see Fig.~\ref{fig:example}) 
is a binary tree satisfying the following conditions:
\begin{enumerate}[(1)]
 \item The root of the entire tree has no left child.
 \item Except for the root, the nodes
that have at most one child are in the last two levels only. 
Leaves at the last level can be scattered, i.\,e., the last level is not
necessarily filled from left to right. 
\item Each node stores an element that is smaller than or equal to every 
element stored in its right subtree.
\end{enumerate}
From the first two properties we deduce that the height of a weak heap
that has $n$ elements is $\lceil \log n \rceil + 1$.  The third
property is called the \emph{weak-heap ordering} or {half-tree
  ordering}. In particular, this property enforces no relation between
an element in a node and those stored its left subtree. On the other hand, it implies that any node together with its right subtree forms a weak heap on its own.
In an array-based implementation, besides the element array $s$, an
array $r$ of \emph{reverse bits} is used, i.\,e., $r_i \in \{0,1\}$ for
$i \in \{0,\ldots,n-1\}$. The root has index $0$. The array index of the left child of $s_i$
is $2i + r_i$, the array index of the right child is $2i +1 - r_i$,
and the array index of the parent is $\lfloor i/2 \rfloor$ (assuming
that $i \neq 0$). Using the fact that the indices of the left and
right children of $s_i$ are exchanged when flipping $r_i$, subtrees can
be reversed in constant time by setting $r_i \leftarrow 1-r_i$.
The \emph{distinguished ancestor} ($\Dancestor{}(j)$) of $s_j$ for $j \neq 0$, is recursively defined as the parent
of $s_j$ if $s_j$ is a right child, and the distinguished ancestor of
the parent of $s_j$ if $s_j$ is a left child. 
The distinguished ancestor of  $s_j$ is the first element on the path from  $s_j$ to the root which is known to be smaller or equal than  $s_j$ by (3). Moreover, any subtree rooted by $s_j$, together with the distinguished ancestor $s_i$ of $s_j$, forms again a weak heap with root $s_i$ by considering $s_j$ as right child of $s_i$.

The basic operation for creating a weak heap is the \Link{} operation which combines two weak heaps into one.
Let $i$ and $j$ be two nodes in a weak
heap such that $s_i$ is smaller than or equal to every
element in the left subtree of $s_j$. Conceptually, $s_j$ and its
right subtree form a weak heap, while $s_i$ and the left subtree of
$s_j$ form another weak heap.  (Note that $s_i$ is not allowed be in the subtree with root $s_j$.) 
The result of \Link{} is a weak heap with root at position $i$.
 If $s_j<s_i$, the two elements are swapped and $r_j$ is flipped. As a result, the
new element $s_j$ will be smaller than or equal to every element in its
right subtree, and the new element $s_i$ will be smaller than or equal
to every element in the subtree rooted at $s_j$. To sum up, \Link{}
requires constant time and involves one element comparison and a
possible element swap in order to combine two weak heaps to a new one.

%

The construction of a weak heap consisting of $n$ elements requires $n-1$ comparisons.
In the standard bottom-up construction of a weak heap the nodes are
visited one by one. Starting with the last node in the array and moving to the front, the two weak heaps rooted at
a node and its distinguished ancestor are joined. 
The amortized cost to get from a node
to its distinguished ancestor is $\Oh(1)$~\cite{EW00}. 

When using weak heaps for sorting, the minimum is removed and the weak heap condition restored until the weak heap becomes empty.
After extracting an element from the root, 
first the \emph{special path} from the root is traversed top-down, and then, 
in a bottom-up process the weak-heap property 
is restored using at most $\lceil \log n \rceil$ 
join operations.
(The special path is established by going once to the
right and then to the left as far as it is possible.) 
Hence, extracting the minimum requires at most $\lceil \log n \rceil$ comparisons.

Now, we introduce a modification to the standard procedure described by Dutton \cite{Dut93}, which has a slightly improved performance, but requires  extra space.  We call this modified algorithm {\sc ExternalWeakHeapsort}. This is because it needs an extra output array, where the elements which are extracted from the weak heap are moved to.
On average {\sc ExternalWeakHeapsort} requires less comparisons than {\sc
  RelaxedWeakHeapsort}~\cite{edelkampstiegeler}.  Integrated in
\QuickXsort{} we can implement it without extra space other than
the extra bits $r$ and some other extra bits.
We introduce an additional array \emph{active} and weaken the
requirements of a weak heap: we also allow nodes on other than the
last two levels to have less than two children. Nodes where the
\emph{active} bit is set to false are considered to have been removed. {\sc
  ExternalWeakHeapsort} works as follows: First, a usual weak heap is
constructed using $n-1$ comparisons. Then, until the weak heap
becomes empty, the root~-- which is the minimal element~-- is moved to
the output array and the resulting hole has to be filled
with the minimum of the remaining elements (so far the only difference to normal {\sc WeakHeapsort} is that there is a separate output area). 

The hole is filled
by searching the special path from the root to a node $x$ which has no
left child. Note that the nodes on the special path are exactly the nodes having the root as distinguished ancestor. Finding the special path does not
need any comparisons, since one only has to follow the reverse
bits. Next, the element of the node $x$
is moved to the root leaving a hole. If $x$ has a right subtree (i.\,e., if $x$ is the root of a weak heap with more than one element), this
hole is filled by applying the hole-filling algorithm recursively to
the weak heap with root $x$. Otherwise, the \emph{active} bit of $x$ is set
to false. Now, the root of the whole weak heap together with the subtree rooted by $x$ forms a
weak heap. However, it remains to restore the weak heap condition for the whole
weak heap.  Except for the root and $x$, all nodes on the special path
together with their right subtrees form weak heaps. Following the
special path upwards these weak heaps are joined with their distinguished ancestor as during the weak heap construction (i.\,e., successively they are joined with the weak heap consisting of the root and the already treated nodes on the special path together with their subtrees). Once, all the weak heaps on the special path are
joined, the whole array forms a weak heap again.

\begin{theorem}\label{thm:whsms}
 For $n = 2^k$ {\sc ExternalWeakHeapsort}
performs exactly the same comparisons as {\sc Mergesort} applied on a
fixed permutation of the same input array.
\end{theorem}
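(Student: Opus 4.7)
My plan is to prove the theorem by induction on $k$, exhibiting for each $n = 2^k$ a permutation $\pi_n$ of the array positions (depending only on $n$) and a bijection between the sequences of comparisons performed by {\sc ExternalWeakHeapsort} on $A$ and by {\sc Mergesort} on $\pi_n(A)$. The base case $k \leq 1$ is immediate, as both algorithms perform at most one comparison. The strategy for the inductive step exploits the regularity of the weak heap on $n = 2^k$ elements: the root has no left child, and its right subtree is a perfectly balanced complete binary tree on $n-1$ nodes, so every subtree on the way down has power-of-two size.

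The first step I would carry out is to prove, by a routine induction on the order of construction visits, that immediately after node $v$ has been visited during the bottom-up construction, the element currently stored at each already-processed node $u$ equals the minimum of the array positions lying in the dA-subtree rooted at $u$. From this it follows that the $n-1$ comparisons made during construction match, edge-by-edge in the dA-tree, the $n-1$ merges of {\sc Mergesort}'s balanced merge tree: each construction comparison is the \emph{first} comparison of a distinct merge. The deepest dA-edges pair up the $n$ positions into a perfect matching and correspond to the $n/2$ level-$0$ merges (each a single comparison of two singletons), while the higher dA-edges compare the minima of two larger balanced sub-tournaments and therefore match the first comparisons of the higher-level merges. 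The permutation $\pi_n$ is then defined precisely from this matching so that, under $\pi_n$, these comparisons coincide pointwise with the corresponding first comparisons in {\sc Mergesort}.

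The extraction phase then must be shown to carry out exactly the comparisons that complete each merge beyond its first. For an extraction from a heap of current size $m$, the joins along the special path perform at most $\lceil \log m \rceil$ comparisons, and I would argue these are the subsequent comparisons inside whichever {\sc Mergesort} merges are currently partially completed. The reverse bit $r_v$ plays the role of recording which of the two input runs of the corresponding merge currently holds the smaller head, so that the special path traces exactly the chain of merges with a pending comparison. When the bottom node $x$ of the special path has a non-empty right subtree, the recursive hole-filling step is handled by applying the inductive hypothesis to the weak heap rooted at $x$, whose size is a power of two strictly less than $n$, and the upward joins stitch the result back into the whole heap.

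The main obstacle, and the technical heart of the proof, will be stating and inductively verifying a precise invariant that links the full state of the weak heap (element contents, reverse bits, active bits) after any prefix of {\sc ExternalWeakHeapsort} operations with the state of {\sc Mergesort}'s ongoing merges after the matching prefix of merge comparisons. Three points need care: (i)~flipping $r_v$ during a join must correspond to swapping which of the two pending heads the merge represented by $v$ is about to compare; (ii)~marking a node $x$ inactive must correspond to exhausting one input run of the merge immediately above $x$; and (iii)~the recursive hole-filling at $x$ must reproduce exactly the comparisons that {\sc Mergesort} performs at the corresponding merge tree node. Once this invariant is in place, equality of the comparison sequences between the two algorithms follows by a straightforward induction on the number of operations performed.
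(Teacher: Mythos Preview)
Your plan can be made to work, but it is considerably more elaborate than the paper's argument. The paper does not separate construction from extraction at all; instead it uses a single structural split. With $n=2^k$, let $r$ be the root and $y$ its unique child. The \emph{left part} is $r$ together with the left subtree of $y$, and the \emph{right part} is $y$ together with its right subtree; each is itself a weak heap on exactly $n/2$ elements. The crucial observation is that the \emph{only} comparisons ever made between an element of one part and an element of the other are the successive comparisons of $s_r$ with $s_y$, and after each extraction the vacated one of $s_r,s_y$ is refilled with the minimum of the remaining elements of its own part. Hence the $r$--$y$ comparisons are exactly the top-level merge of the two sorted halves, and the same reasoning applied recursively to the two $n/2$-element weak heaps finishes the proof in a couple of paragraphs. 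Your phase-based matching---construction comparisons $\leftrightarrow$ first comparisons of the $n-1$ merges, extraction joins $\leftrightarrow$ the remaining merge comparisons, glued together by a global invariant on reverse and active bits---would yield a finer, operation-by-operation correspondence, but it requires exactly the bookkeeping you yourself flag as the main obstacle (and the ``inductive hypothesis applied to the sub-heap at $x$'' is not literally available, since that sub-heap is not being freshly sorted by {\sc ExternalWeakHeapsort}). The paper's left/right split sidesteps all of that.
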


\begin{proof}
 First, recall the {\sc
  Mergesort} algorithm: The left half and the right half of the array
are sorted recursively and then the two subarrays are merged together
by always comparing the smallest elements of both arrays and moving
the smaller one to the separate output area.
Now, we move to {\sc WeakHeapsort}. Consider the tree as it is initialized with all 
reverse bits set to \emph{false}. Let $r$ be the root and $y$
its only child (not the elements but the positions in the
tree). We call $r$ together with the left subtree of $y$ the left part
of the tree and we call $y$ together with its right subtree the right
part of the tree. That means the left part and the right part form
weak heaps on their own.
The \emph{only} time an element is moved from the right to the left
part or vice-versa is when the data elements $s_r$ and $s_y$ are
exchanged. 
However, always one of the data elements of $r$ and $y$ comes from the right part and one from the left part. After extracting 
the minimum $s_r$, it is replaced by the smallest remaining element of the part $s_r$ came from. Then, the new $s_r$ and $s_y$ are compared again and so on.
Hence, for extracting the
elements in sorted order from the weak heap the following happens. First, the smallest
elements of the left and right part are determined, then they are
compared and finally the smaller one is moved to the output area. This
procedure repeats until the weak heap is empty. This is
exactly how the recursion of {\sc Mergesort} works: always the smallest
elements of the left and right part are compared and the smaller one
is moved to the output area. 
If $n=2^k$, then the left
and right parts for {\sc Mergesort} and {\sc WeakHeapsort} have the same
sizes. 
\end{proof}

By \cite[5.2.4--13]{Knu73} we obtain the following corollary.

\begin{corollary}[Average Case {\sc ExternalWeakHeapsort}]\label{cor:ewhs}
 For $n = 2^k$ the algorithm {\sc ExternalWeakHeapsort} uses
 approximately $n \log n - 1.26n$ comparisons in the average case.
\end{corollary}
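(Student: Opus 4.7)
The plan is to reduce the corollary to Knuth's known average-case analysis of \textsc{Mergesort} via Theorem~\ref{thm:whsms}. By that theorem, for $n=2^k$ and for every input array $A$, the sequence of key comparisons performed by \textsc{ExternalWeakHeapsort} on $A$ is identical to the sequence of key comparisons performed by \textsc{Mergesort} on $\sigma(A)$, where $\sigma$ is a permutation of the index set depending only on $n$. The critical point I would make explicit is that $\sigma$ is indeed a \emph{fixed} permutation, i.e.\ it is determined by the tree shape used during weak-heap initialization (which depends on $n$ alone) and \emph{not} on the values stored in $A$. Thus the number of comparisons is a function $C_{\mathrm{EWHS}}(A)=C_{\mathrm{MS}}(\sigma(A))$.

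The next step is the probabilistic reduction. In the average case we assume $A$ is a uniformly random permutation of $n$ distinct keys. Composing a uniformly random permutation with a fixed permutation $\sigma$ yields again a uniformly random permutation. Therefore
\[
  \mathbb{E}_A\!\left[C_{\mathrm{EWHS}}(A)\right]
  \;=\;\mathbb{E}_A\!\left[C_{\mathrm{MS}}(\sigma(A))\right]
  \;=\;\mathbb{E}_B\!\left[C_{\mathrm{MS}}(B)\right],
\]
so the average comparison counts of the two algorithms agree exactly (not only up to lower-order terms) when $n=2^k$.

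Finally, I would invoke the classical analysis of straight two-way \textsc{Mergesort} for $n=2^k$ as summarized in Knuth~\cite[5.2.4--13]{Knu73}: the expected number of comparisons on a uniformly random permutation of length $2^k$ is $n\log n - \kappa\, n + o(n)$ with $\kappa\approx 1.2645$, obtained as a sum over merge levels of the expected number of comparisons used to merge two sorted runs of equal size. Combining this with the previous paragraph gives the claimed bound.

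The only non-routine step is the verification that the permutation $\sigma$ in Theorem~\ref{thm:whsms} is input-independent; this should be noted explicitly because without it the averaging argument would not go through. Everything else is either a direct application of an already-proved theorem or a citation of Knuth.
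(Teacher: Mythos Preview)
Your argument is correct and follows exactly the route the paper takes: the paper derives the corollary directly from Theorem~\ref{thm:whsms} together with Knuth's average-case analysis of \textsc{Mergesort} \cite[5.2.4--13]{Knu73}, without spelling out the averaging step. Your explicit justification that the permutation~$\sigma$ is input-independent (so that uniformity is preserved) is a welcome clarification of what the paper leaves implicit, but the approach is the same.
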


If $n$ is not a power of two, the sizes of left and right parts
of {\sc WeakHeapsort} are less balanced than the left and right parts of
ordinary {\sc Mergesort} and one can expect a slightly higher number of
comparisons. 
For {\sc QuickWeakHeapsort}, the half of the array which is not sorted by {\sc ExternalWeakHeapsort} is used as output area.
Whenever the root is moved to the output area, the element that occupied that place before is inserted as a dummy element at the position where the \emph{active} bit is set to false. 
Applying \prettyref{thm:quickXsort}, we obtain the rough estimate of $n \log n - 1.2n$ comparisons for the average case of {\sc QuickWeakHeapsort}.

\section{{\sc QuickMergesort}}\label{sec:QMS}

As another example for \QuickXsort{} we consider {\sc
  QuickMergesort}. For the {\sc Mergesort} part we use standard
(top-down) {\sc Mergesort} which can be implemented using $m$ extra
spaces to merge two arrays of length $m$ (there are other methods like
in \cite{Reinhardt92} which require less space~-- but for our purposes
this is good enough). The procedure is depicted in
\prettyref{fig:merge}. We sort the larger half of the partitioned
array with {\sc Mergesort} as long as we have one third of the whole
array as temporary memory left, otherwise we sort the smaller part with {\sc Mergesort}. 
\begin{figure}[t]
\begin{multicols}{2}
  \begin{center}
  \begin{scriptsize}
\begin{tikzpicture}[scale = 0.5]

\draw(0,0) -- (9,0);
\draw(0,1) -- (9,1);
\draw(0,0) -- (0,1);

\draw[thick](6,0) -- (6,1);
 \node[] (st) at (6,-0.3) {Pivot};
 
\draw[->](4.5,1) ..controls(4.5,2) and (7.5, 2 ).. (7.5,1);

\draw(3,0) -- (3,1);
\draw(9,0) -- (9,1);
\end{tikzpicture}
\end{scriptsize}
\end{center}

  \begin{center}
  \begin{scriptsize}
\begin{tikzpicture}[scale = 0.5]

\draw(0,0) -- (9,0);
\draw(0,1) -- (6,1);
\draw(6,0.3) -- (9,1);
\draw(0,0) -- (0,1);

\draw[thick](6,0) -- (6,1);
 \node[] (st) at (6,-0.3) {Pivot};
 
\draw[->](1.5,1) ..controls(1.5,2) and (4.5, 2 ).. (4.5,1);

\draw(3,0) -- (3,1);
\draw(9,0) -- (9,1);
\end{tikzpicture}
\end{scriptsize}
\end{center}
\end{multicols}

  \begin{center}
  \begin{scriptsize}
\begin{tikzpicture}[scale = 0.5]

\draw(0,0) -- (9,0);
\draw(0,1) -- (3,1);
\draw(3,0.3) -- (6,1);
\draw(6,0.3) -- (9,1);
\draw(0,0) -- (0,1);

\draw[thick](6,0) -- (6,1);
 \node[] (st) at (6,-0.3) {Pivot};
 
\draw[->](4.5,0.65) ..controls(4,1.7) and (1.5, 1.7 ).. (1,1);
\draw[->](7.5,0.65) ..controls(7,2) and (1.5, 2 ).. (1,1);

\draw(3,0) -- (3,1);
\draw(9,0) -- (9,1);
\end{tikzpicture}
\end{scriptsize}
\end{center}

\vspace{-0.5cm}
\caption{First the two halves of the left part are sorted moving them from
  one place to another. Then, they are merged to the original place.}\label{fig:merge}
\end{figure}
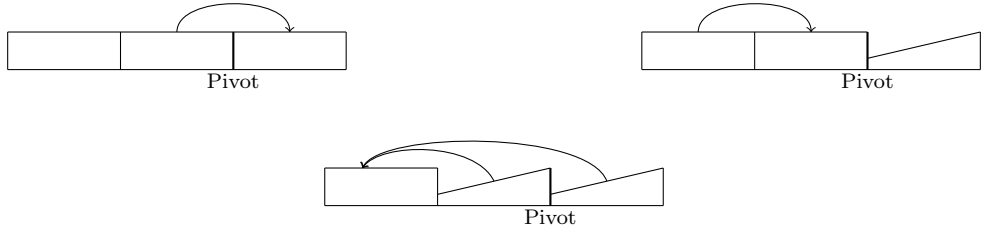
Hence, the part which is not sorted by {\sc Mergesort} always provides enough temporary space. When a data element should be moved to or from the temporary space, it is swapped with the element occupying the respective position. Since {\sc Mergesort} moves through the data from left to right, it is always known which are the elements to be sorted and which are the dummy elements.
Depending on the
implementation the extra space needed is $\Oh(\log n)$ words for
the recursion stack of {\sc Mergesort}. By avoiding recursion this
can even be reduced to $\Oh(1)$. \prettyref{thm:quickXsort} together with \cite[5.2.4--13]{Knu73} yields the next
result.

\begin{theorem} [Average Case {\sc QuickMergesort}]\label{thm:QMS}
{\sc QuickMergesort} is an internal sorting algorithm that performs at most $n \log n - 1.26n + o(n)$
comparisons on average.
\end{theorem}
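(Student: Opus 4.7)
The plan is to invoke Theorem~\ref{thm:quickXsort} with X being the standard top-down \textsc{Mergesort} and $c = -1.26$. To do this, I need to verify three things: (i) the average-case comparison count of \textsc{Mergesort} is $n\log n - 1.26\,n + o(n)$; (ii) the \textsc{Mergesort} implementation described in Figure~\ref{fig:merge} can be carried out using the idle half of the partitioned array as external storage, without incurring any additional comparisons beyond those of ordinary \textsc{Mergesort}; and (iii) the space-management rule (sort the larger half with \textsc{Mergesort} when the smaller half has at least $n/3$ slots free, otherwise sort the smaller half) always supplies enough scratch space.

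For (i), the bound is classical: by Knuth~\cite[5.2.4--13]{Knu73}, top-down \textsc{Mergesort} on $n$ elements performs $n\log n - 1.2645\,n + o(n)$ comparisons on average for a uniformly random permutation, so the hypothesis of Theorem~\ref{thm:quickXsort} with $c = -1.26$ is satisfied. I would remark that \textsc{Mergesort} is oblivious to the actual content of its scratch space during merges: each merge step only compares keys of ``live'' elements and writes back to the active region, never comparing against the dummy elements temporarily swapped into the scratch region. Hence the swap-based internalisation (swapping a live element with whatever dummy currently occupies the destination cell) preserves the comparison count exactly; its only cost is bookkeeping of indices, which is already accounted for in the $o(n)$ term.

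For (iii), I would argue: to sort a block of size $m$ using top-down \textsc{Mergesort}, it suffices to have $\lceil m/2\rceil$ slots of external memory (recursively sort the two halves into the scratch region, then merge back into the original $m$ positions, as Figure~\ref{fig:merge} indicates). If the partition splits $n$ into halves of size $m$ and $n-m$ with $m \le n - m$, we sort the larger half (size $n-m$) and need $\lceil (n-m)/2\rceil \le m$ slots, which is exactly what the smaller half supplies; if $m < n/3$ (so one side is very small), we sort the smaller half instead, needing $\lceil m/2\rceil \le n - m$ slots, trivially available. Either way, the idle side is large enough, so \textsc{Mergesort} can run in place using swaps.

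Finally, I would combine the pieces. Theorem~\ref{thm:quickXsort} is stated allowing X to be applied to either side of the partition (its proof takes the maximum over the two choices), so the data-dependent decision of which half to sort with \textsc{Mergesort} is covered by the theorem. Applying the theorem with $c=-1.26$ then yields a comparison count of $n\log n - 1.26\,n + o(n)$ on average. The algorithm is internal because the only extra storage beyond the input array is the $\Oh(\log n)$ recursion stack (eliminable to $\Oh(1)$ by iterative \textsc{Mergesort}) plus the constant-size bookkeeping for the partitioning and pivot-sampling routine. The principal thing to check carefully is part (iii), since an off-by-one in the space bound would force sorting the smaller half too often and distort the recursion; but the $n/3$ threshold gives a comfortable margin, and with pivots drawn from a $\Theta(\sqrt n)$ sample the balanced case holds with probability $1-o(1)$ by Lemma~\ref{lm:prob_bound}, so unbalanced partitions contribute only to the absorbed $o(n)$ term already present in Theorem~\ref{thm:quickXsort}.
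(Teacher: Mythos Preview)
Your proposal is correct and follows exactly the paper's approach: invoke Theorem~\ref{thm:quickXsort} with $c=-1.26$, citing Knuth~\cite[5.2.4--13]{Knu73} for the average-case bound of \textsc{Mergesort}. The paper's own proof is a single sentence doing precisely this; your additional verification of points~(ii) and~(iii) (that the swap-based implementation preserves the comparison count and that the $n/3$ rule guarantees sufficient scratch space) is sound and simply spells out what the paper leaves to the surrounding prose and Figure~\ref{fig:merge}.
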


We can do even better if we sort small subarrays with another
algorithm Z requiring less comparisons but extra space and more moves, e.\,g., {\sc Insertionsort}
 or {\sc MergeInsertion}.  If we use $\Oh(\log n)$
elements for the base case of {\sc Mergesort}, we have to call Z at most
$\Oh(n/ \log n)$ times. In this case we can allow additional
operations of Z like moves in the order of $\Oh(n^2)$, given that $\Oh((n/
\log n)\cdot \log^2 n ) = \Oh(n \log n)$.

Note that for
the next theorem we only need that the size of the base cases grows as $n$ grows. 
Nevertheless, $\Oh(\log n)$
is the largest growing value we can choose if we apply a base case 
algorithm with
$\Theta(n^2)$ moves and want to achieve an $\Oh(n\log n)$ overall
running time.

\begin{theorem}[{\sc QuickMergesort} with Base Case]\label{thm:QMSbase}
Let $Z$ be some  sorting algorithm with
$n \log n +en + o(n)$ comparisons on the average and other operations
taking at most $\Oh(n^2)$ time. If base cases of size $\Oh(\log n)$ are sorted with Z, {\sc QuickMergesort} uses at most
 $n \log n +en + o(n)$ comparisons and $\Oh(n\log n)$ other instructions on the average.
\end{theorem}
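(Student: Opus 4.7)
The plan is to invoke \prettyref{thm:quickXsort} with $X$ taken to be the Mergesort variant that recurses until subarrays reach size $b = \Theta(\log n)$ and then calls $Z$ on each base case. Two things must be established: that $X$ itself makes $n\log n + en + \oh(n)$ comparisons on average, and that the total non-comparison work of the resulting QuickMergesort is $\Oh(n\log n)$.

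For the comparison count of $X$, observe that merge operations do not depend on how the base-case subarrays were sorted, so the expected number of comparisons decomposes as the merging cost above the base-case level plus the $Z$-cost at the $n/b$ base cases. Starting from the standard Mergesort bound $n\log n - 1.26n + \oh(n)$ and subtracting the cost of fully recursing within each base-case subarray, namely $\frac{n}{b}\bigl(b\log b - 1.26 b + \oh(b)\bigr) = n\log b - 1.26n + \oh(n)$, we obtain a merging contribution of $n\log n - n\log b + \oh(n)$. Since each base case receives a uniformly random subpermutation of its contents, the total $Z$-cost is $\frac{n}{b}\bigl(b\log b + eb + \oh(b)\bigr) = n\log b + en + \oh(n)$. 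Adding the two yields $n\log n + en + \oh(n)$ comparisons for $X$, and \prettyref{thm:quickXsort} transfers this bound directly to QuickMergesort up to an additional $\oh(n)$ term.

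For the non-comparison work, the merging and bookkeeping of Mergesort above the base-case level cost $\Oh(n\log n)$ operations. Each base case runs $Z$ in $\Oh(b^2) = \Oh(\log^2 n)$ time by hypothesis, and there are $\Oh(n/b) = \Oh(n/\log n)$ such invocations, summing to $\Oh(n\log n)$. The partitioning step of QuickMergesort costs $\Oh(n)$ per recursion level and sums to $\Oh(n\log n)$ in expectation by an argument in the style of \prettyref{thm:quickXsort}; median-of-$\Theta(\sqrt{n})$ pivot selection contributes only $\Oh(\sqrt{n})$ per partition, i.e.\ $\oh(n)$ overall.

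The main obstacle is controlling the $\oh(n)$ error on the merging line: this requires $b \to \infty$ so that $\oh(b)$ terms multiplied by $n/b$ base cases remain $\oh(n)$, which is precisely why the base-case size must grow with $n$ rather than remain constant. A secondary subtlety is that $Z$'s average-case hypothesis must legitimately apply to each individual base case; this holds because both the QuickMergesort partitioning step and the recursive splits inside $X$ preserve the uniform distribution on subpermutations at the rank level, so each base case indeed sees a uniformly random input.
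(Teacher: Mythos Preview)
Your proof is correct and reaches the same conclusion, but the route differs from the paper's. You decompose the comparison count of the modified Mergesort $X$ into (i) the merging work above the base-case level, obtained by subtracting the Mergesort cost on each base block from the full Mergesort average $n\log n - 1.26n + \oh(n)$, and (ii) the $Z$-work on the base blocks. The paper instead runs a short direct induction: for $m$ in the base-case range it has $S_{\log n}(m) = U(m) \leq m\log m + (e+\epsilon)m$ once $n$ is large, and for larger $m$ it uses the crude merging bound $S_{\log n}(m) \leq 2\,S_{\log n}(m/2) + m$, which immediately preserves the invariant $m\log m + (e+\epsilon)m$. The paper's argument is slightly cleaner in that it never needs the specific Mergesort constant $-1.26$ (which in your computation appears only to cancel), while your decomposition makes it more transparent \emph{why} the linear constant of $Z$ transfers unchanged to the whole algorithm. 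Your handling of the $\Oh(n\log n)$ non-comparison work and the remark that $b\to\infty$ is essential for the $\oh(b)\cdot(n/b)=\oh(n)$ step match the paper's discussion preceding the theorem.
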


\begin{proof}
 By \prettyref{thm:quickXsort} and the preceding remark, the only thing we have to prove is
 that {\sc Mergesort} with base case Z requires on average at most
 $\leq n \log n + en + o(n)$ comparisons, given that Z needs
 $\leq U(n) = n \log n + en + o(n)$ comparisons on average.
 The latter means that for every $\epsilon >0 $ we have $U(n) \leq
 n\log n + (e +\epsilon)\cdot n$ for $n$ large enough.  
 
 Let $S_{k}(m)$
 denote the average case number of comparisons of {\sc Mergesort} with
 base cases of size $k$ sorted with Z and let $\epsilon > 0$. Since
 $\log n$ grows as $n$ grows, we have that $S_{\log n}(m) = U(m) \leq
 m\log m + (e +\epsilon)\cdot m$ for $n$ large enough and $(\log n )/2
 < m \leq \log n$.
For $m > \log n$ we have $S_{\log n}(m) \leq 2 \cdot S_{\log n}(m/2) + m$
and by induction we see that $S_{\log n}(m) \leq m\log m + (e +\epsilon)\cdot m$. Hence, also $S_{\log n}(n) \leq n\log n + (e +\epsilon)\cdot n$ for $n$ large enough.

\end{proof}

Using {\sc Insertionsort} we obtain the following result. Here, $\ln$ denotes the natural logarithm.
As we did not find a result in literature, we also provide a proof. 
Recall that {\sc Insertionsort} inserts the elements one by one into the already sorted sequence by binary search.

\begin{proposition}[Average Case of {\sc Insertionsort}]\label{prop:avgIns}
The sorting algorithm {\sc Insertionsort} needs $n\log n - 2 \ln
2 \cdot n + c(n)\cdot n + \Oh(\log n)$ comparisons on the average where
$c(n) \in [-0.005,0.005]$.
\end{proposition}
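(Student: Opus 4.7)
The plan is to count comparisons insertion-by-insertion and reduce the total to Stirling's formula plus an elementary Riemann-sum estimate. Let $B(k)$ denote the expected number of comparisons required to binary-insert an element into a sorted array of $k-1$ entries (hence $k$ possible insertion gaps). Because the input is a uniformly random permutation, the insertion position is uniform over these $k$ gaps, and a balanced binary search realises the external path length of an optimal binary search tree with $k$ leaves. In such a tree $2k-2^H$ leaves sit at depth $H$ and $2^H-k$ at depth $H-1$, where $H=\lceil\log k\rceil$, which yields the closed form
\begin{equation*}
B(k) \;=\; H + 1 - \frac{2^H}{k} \;=\; \log k + f\!\left(k/2^H\right), \qquad f(\alpha) \;:=\; 1 - \frac{1}{\alpha} - \log\alpha.
\end{equation*}
Summing gives $T(n)=\sum_{k=1}^{n}B(k)=\log(n!)+\sum_{k=1}^{n}f(\alpha_k)$, with $\alpha_k=k/2^{\lceil\log k\rceil}\in(1/2,1]$.

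Stirling's formula handles the first summand: $\log(n!)=n\log n - n/\ln 2+\Oh(\log n)$. For the second, I would group the indices by level $j=\lceil\log k\rceil$; on a full level, $k$ runs through $\{2^{j-1}+1,\dots,2^j\}$ and $\alpha_k$ hits the uniform grid $\{1/2+i/2^j\}_{i=1}^{2^{j-1}}$. A standard Riemann-sum estimate (with an Euler--Maclaurin second-derivative error bound) gives
\begin{equation*}
\sum_{k=2^{j-1}+1}^{2^j} f(\alpha_k) \;=\; 2^j \int_{1/2}^{1} f(\alpha)\,d\alpha + \Oh(1),
\end{equation*}
and an elementary integration yields $\int_{1/2}^{1} f(\alpha)\,d\alpha=\tfrac{1}{2\ln 2}-\ln 2$. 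Summing over full levels up to $j=\lfloor\log n\rfloor$ produces the linear term $\bigl(1/\ln 2 - 2\ln 2\bigr)n+\Oh(\log n)$, while the incomplete final level containing indices $k\in(2^{\lfloor\log n\rfloor},n]$ adds a bounded oscillating correction $c(n)\cdot n$ depending on $n$ essentially through $\theta=n/2^{\lfloor\log n\rfloor}-1\in[0,1)$.

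Combining with Stirling, the two $\pm n/\ln 2$ contributions cancel and one obtains $T(n)=n\log n - 2\ln 2\cdot n + c(n)\cdot n + \Oh(\log n)$, which is the asserted asymptotic. The main obstacle is pinning down the tight numerical envelope $c(n)\in[-0.005,0.005]$. To do this, I would write out the contribution of the partial last block as an explicit function of $\theta$, combine it with the subleading Riemann-sum error from the full levels, and then extremise the resulting (smooth, essentially unimodal) function on $[0,1)$. Its maximum and minimum are located by a single derivative and evaluated numerically to confirm that both lie within the claimed envelope.
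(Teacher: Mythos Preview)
Your approach is correct but takes a genuinely different route from the paper. Both start from the same per-step count $B(k)=\lceil\log k\rceil+1-2^{\lceil\log k\rceil}/k$, but then diverge: the paper keeps this decomposition, evaluates $\sum_{k\le n}\lceil\log k\rceil$ via Knuth's closed form and $\sum_{k\le n}2^{\lceil\log k\rceil}/k$ by grouping into dyadic blocks and using the harmonic-number asymptotics $H(m)=\ln m+\gamma+\Oh(1/m)$. This yields directly the explicit expression
\[
\frac{T(n)-n\log n}{n}=x-2^{x}+1-\ln 2\cdot(2-x)\cdot 2^{x}+\Oh\!\left(\tfrac{\log n}{n}\right),\qquad x=\lceil\log n\rceil-\log n,
\]
which is then seen to range over $[-1.389,-1.381]$, giving the $\pm0.005$ envelope around $-2\ln 2$ immediately.

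Your route instead rewrites $B(k)=\log k+f(k/2^{\lceil\log k\rceil})$, handles $\sum\log k$ by Stirling, and treats $\sum f(\alpha_k)$ as a Riemann sum. This is equally valid and arguably more conceptual; the cancellation of the two $n/\ln 2$ terms is neat. The trade-off is that the paper's harmonic-sum computation produces a closed form for the oscillation with no further work, whereas in your approach the last-block integral $\int_{1/2}^{(1+\theta)/2}f$ still has to be written out and combined with the factor $2^{\lfloor\log n\rfloor}/n=1/(1+\theta)$ before one can extremise. After the change of variable $1+\theta=2^{1-x}$ the two expressions coincide, so the final numerical check is identical; you simply arrive at it less directly. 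One small sloppiness to fix: the full-level sum gives $(1/\ln 2-2\ln 2)\cdot 2^{\lfloor\log n\rfloor}$, not $(1/\ln 2-2\ln 2)\cdot n$; the discrepancy is part of the oscillating $c(n)\cdot n$ and should be said explicitly.
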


\begin{corollary}[{\sc QuickMergesort} with Base Case {\sc Insertionsort}]
If we use as base case {\sc Insertionsort}, {\sc
  QuickMergesort} uses at most $n \log n - 1.38 n + o(n)$ comparisons
and $\Oh(n \log n)$ other instructions on the average.
\end{corollary}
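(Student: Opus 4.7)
The plan is to simply combine the previously stated results: Theorem on \QuickXsort{} average case specialised to \QuickMergesort{} with a base case, together with the proposition giving the exact average number of comparisons of \textsc{Insertionsort}. No new analysis of the partition or the merge step is required; everything follows once we check that \textsc{Insertionsort} satisfies the hypotheses of Theorem~(\QuickMergesort{} with Base Case) and that we substitute the correct constant.

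First I would verify the hypotheses on $Z=\textnormal{\textsc{Insertionsort}}$. By the preceding proposition, \textsc{Insertionsort} performs
\[
n\log n - 2\ln 2\cdot n + c(n)\cdot n + \Oh(\log n)
\]
comparisons on average, with $c(n)\in[-0.005,0.005]$; in particular this is of the form $n\log n + en + o(n)$ required by Theorem~(\QuickMergesort{} with Base Case), with $e = -2\ln 2 + c(n)$. The non-comparison work of \textsc{Insertionsort} is dominated by the element moves performed after each binary search, which are $\Oh(n)$ per inserted element and therefore $\Oh(n^2)$ overall, so the second hypothesis of the theorem is also met.

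Second I would plug these numbers into the theorem. Choosing base cases of size $\Theta(\log n)$ sorted by \textsc{Insertionsort}, Theorem~(\QuickMergesort{} with Base Case) yields an average of at most
\[
n\log n + \bigl(-2\ln 2 + c(n)\bigr)\,n + o(n)
\]
comparisons, together with $\Oh(n\log n)$ other instructions. Since $2\ln 2 = 1.3862\ldots$ and $c(n)\le 0.005$, we have $-2\ln 2 + c(n) \le -1.3812 < -1.38$ for all sufficiently large $n$, so the bound $n\log n - 1.38\, n + o(n)$ follows, absorbing the slack into the $o(n)$ term.

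There is no real obstacle here; the only points that need a line of justification are (i) that Theorem~(\QuickMergesort{} with Base Case) applies with $\log n$-sized base cases (which is explicit in its statement, and the cost of the $\Oh(n/\log n)$ base case calls of quadratic-move cost is $\Oh((n/\log n)\log^2 n) = \Oh(n\log n)$, matching the claimed non-comparison bound), and (ii) the elementary numerical estimate $-2\ln 2 + 0.005 < -1.38$ that turns the exact constant from the proposition into the round number in the statement.
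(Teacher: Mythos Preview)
Your proposal is correct and matches the paper's approach: the paper provides no separate proof for this corollary, treating it as an immediate consequence of Theorem~\ref{thm:QMSbase} and Proposition~\ref{prop:avgIns}, which is exactly what you do. The only cosmetic point is that in Theorem~\ref{thm:QMSbase} the parameter $e$ is a fixed constant, so rather than writing ``$e=-2\ln 2+c(n)$'' you should take the worst case $e=-2\ln 2+0.005$ from the outset; your subsequent numerical check already uses this value, so the argument goes through unchanged.
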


\begin{proof}[Proof of \prettyref{prop:avgIns}]
 First, we take a look at the average number of comparisons  $T_{\mathrm{InsAvg}}(k)$ to insert
 one element into a sorted array of $k-1$ elements by binary insertion.
 
 To insert a new element into $k-1$ elements either needs $\ceil{\log
   k}-1$ or $\ceil{\log k}$ comparisons. There are $k$ positions where
 the element to be inserted can end up, each of which is equally likely. For $2^{\ceil{\log k}} - k$ of
 these positions $\ceil{\log k}-1$ comparisons are needed. For the
 other $k - (2^{\ceil{\log k} } - k ) = 2k -2^{\ceil{\log k}}$
 positions $\ceil{\log k}$ comparisons are needed. This means
 \begin{eqnarray*}
 T_{\mathrm{InsAvg}}(k) &=&\frac{(2^{\ceil{\log k}} - k )\cdot (\ceil{\log{k}}-1) + (2k -2^{\ceil{\log k}}) \cdot \ceil{\log k} }{k}\\
  &=& \ceil{\log k } + 1 - \frac{2^{\ceil{\log k} }}{k}
 \end{eqnarray*}
 comparisons are needed on average. By \cite[5.3.1--(3)]{Knu73}, we obtain for the average case for sorting $n$ elements:
  \begin{eqnarray*}
T_{\mathrm{InsSortAvg}}(n) &=& \sum_{k=1}^{n}  T_{\mathrm{InsAvg}}(k)= \sum_{k=1}^{n}  \left(\ceil{\log k } + 1- \frac{2^{\ceil{\log k}}}{k}\right) \\
   &=&n\cdot\ceil{\log n} -  2^{\ceil{\log n}} + 1 + n - \sum_{k=1}^{n}  \frac{2^{\ceil{\log k}}}{k}.
 \end{eqnarray*}
 We examine the last sum separately. In the following we write $H(n) =
 \sum_{k = 1}^n\frac{1}{k} = \ln n + \gamma + \Oh(\frac{1}{n})$ for
 the harmonic sum with $\gamma$ the Euler
 constant.
  \begin{eqnarray*}
\sum_{k=1}^{n}  \frac{2^{\ceil{\log k}}}{k} 
  &=& 1 + \sum_{i=0}^{\ceil{\log n}-2} \sum_{\ell = 1}^{2^i}\  \frac{2^{i+1}}{2^i+\ell}\ + \sum_{\ell = 2^{\ceil{\log n}-1} + 1}^{n}  \frac{2^{\ceil{\log n}}}{\ell} \\
&=& 1+\left(\sum_{i=0}^{\ceil{\log n}-2} 2^{i+1}\cdot\Bigl(H\left(2^{i+1}\right) - H\left(2^i\right)\Bigr) \right) +  2^{\ceil{\log n}}\cdot\left(H(n) - H\bigl(2^{\ceil{\log n}-1}\bigr)\right) \\
&=& \sum_{i=0}^{\ceil{\log n}-2} 2^{i+1}\cdot\left(\ln \left(2^{i+1}\right)) + \gamma- \ln\left(2^i\right)-\gamma \right)\\ 
&& \qquad+ \left(\ln\left(2^{n}\right) + \gamma -\ln\bigl(2^{\ceil{\log n}-1}\bigr) - \gamma \right)\cdot 2^{\ceil{\log n}} + \Oh(1)\\
&=& \ln 2 \cdot \sum_{i=0}^{\ceil{\log n}-2} 2^{i+1}\cdot(i+1-i )\\ 
&& \qquad+ \Bigl(\log(n)\cdot \ln 2  -({\ceil{\log n}-1})\cdot \ln 2\Bigr)\cdot 2^{\ceil{\log n}}\\
&=& \ln 2 \cdot \left( 2\cdot \bigl(2^{\ceil{\log n}-1} -1\bigr) + (\log n -\ceil{\log n} + 1)\cdot 2^{\ceil{\log n}}\right)+ \Oh(1)\\ 
&=& \ln 2 \cdot \left( 2  + \log n -\ceil{\log n}\right)\cdot 2^{\ceil{\log n}}+ \Oh(1)\\ 
 \end{eqnarray*}
Hence, we have 
  \begin{eqnarray*}
T_{\mathrm{InsSortAvg}}(n)  
&=&n\cdot\ceil{\log n} -  2^{\ceil{\log n}}  + n- \ln 2 \cdot \left( 2  + \log n -\ceil{\log n}\right)\cdot 2^{\ceil{\log n}}+ \Oh(1).
 \end{eqnarray*}
 In order to obtain a numeric bound for $T_{\mathrm{InsSortAvg}}(n)$,
 we compute $(T_{\mathrm{InsSortAvg}}(n) - n\log n)/n$ and then
 replace $\ceil{\log n} - \log n$ by $x$. This yields a function
\[x\mapsto x- 2^x + 1 - \ln 2\cdot(2 -x)\cdot 2^x,\]
which oscillates between $-1.381$ and $-1.389$ for $0\leq x < 1$. For
$x=0$, its value is $2 \ln 2 \approx 1.386$.
\end{proof}

Bases cases of growing size,
always lead to a constant factor overhead in running time if an algorithm
with a quadratic number of total operations is applied. Therefore, in the
experiments we will also consider constant size base cases which
offer a slightly worse bound for the number of comparisons, but are
faster in practice. We do not analyze them separately, since the
preferred choice for the size depends on the type of data to be sorted and the system on which the
algorithms run.

\section{{\sc MergeInsertion}}

{\sc MergeInsertion} by Ford and Johnson \cite{FordJ59} 
is one of the best sorting algorithms in terms of number of comparisons. Hence, it can be applied for sorting base cases of {\sc QuickMergesort} what yields even better results than {\sc Insertionsort}. Therefore, we want to give a brief description of the algorithm and our implementation.
While the description is simple, {\sc MergeInsertion}  is not easy to implement efficiently.
Our implementation is based on weak heaps and uses $n \log n + n$ extra bits. 
Algorithmically, {\sc MergeInsertion}$(s_0,\ldots,s_{n-1})$  
can be
described as follows (an intuitive example for $n=21$ can be found in~\cite{Knu73}).

\begin{enumerate}
\item Arrange the input such that $s_i \geq
  s_{i+\floor{n/2}}$ for $0 \leq i < \floor{n/2}$ with one comparison
  per pair. Let $a_i = s_i$ and $b_i = s_{i+\floor{n/2}} $ for
$0 \leq i < \floor{n/2}$, and $b_{\floor{n/2}} = s_{n-1}$ if $n$ is odd.
\item Sort the values $a_0{,}...{,}a_{\lfloor n/2 \rfloor-1}$
  recursively with {\sc MergeInsertion}.
\item 
Rename the solution as follows:  $b_0 \le a_0 \le a_1 \le \dots \le a_{\lfloor n/2
   \rfloor-1}$ and insert the elements  $b_1,\ldots,b_{\lceil n/2
  \rceil-1}$ via binary insertion, following the ordering $b_2$,
$b_1$, $b_4$, $b_3$, $b_{10}$, $b_9, \dots,b_5,\dots$, $b_{t_{k-1}}$, $b_{t_{k-1}-1}, \dots b_{t_{k-2}+1}$, $b_{t_{k}}, \dots$ 
into the main chain, where  $t_k = (2^{k+1}+(-1)^k)/3$.
\end{enumerate}

Due to the different renamings, the recursion, and the
change of link structure, the design of an efficient implementation
is not immediate.
Our proposed implementation of {\sc MergeInsertion} is based on a
tournament tree representation with weak heaps as in
\prettyref{sec:QWHS}. 
The pseudo-code implementations for all the operations to construct a
tournament tree with a weak heap and to access the partners in each
round are shown in Fig.~\ref{tournament} in the appendix. (Note that for simplicity in the above formulation the
indices and the order are reversed compared to our implementation.)

One main subroutine of {\sc MergeInsertion} is binary insertion. The call \Binaryinsert$(x,y,z)$ inserts
the element at position $z$ between position $x - 1$ and $x+y$ by
binary insertion. (The pseudo-code
implementations for the binary search routine is shown in
\prettyref{fig:binarysearch} in the appendix.)
In this routine we do not move the data elements themselves, but
we use an additional index array $\phi_0,\ldots,\phi_{n-1}$ to point to the
elements contained in the weak heap tournament tree and 
move these indirect addresses. This approach has the advantage that the relations stored in the tournament tree are preserved.

The most important procedure for {\sc MergeInsertion} is the organization
of the calls for \Binaryinsert{}. After adapting the addresses for the elements $b_i$
(w.\,r.\,t.\ the above description) in the second part of the array, the
algorithm calls the binary insertion routine with appropriate indices.
Note that we always use $k$ comparisons for all elements of the $k$-th
block (i.\,e., the elements $b_{t_k},\dots, b_{t_{k-1}+1}$) even if there might be the chance to save one comparison. 
By introducing an additional array, which for each $b_i$ contains the
current index of $a_i$, we can exploit the
observation that not always $k$ comparisons are needed to insert an
element of the $k$-th block. In the following we call this the
\emph{improved} variant.
The pseudo-code of the basic variant is shown in Fig.\ \ref{xxmerge}. The last sequence
is not complete and is thus tackled in a special case. 

\begin{algorithm}[t!]
\SetKwInput{Proc}{procedure}
\SetKwInput{Global}{global}
\SetKwFor{For}{for}{}{}
\SetKwFor{While}{while}{}{}
\SetKw{Return}{return}

\Proc{\Merge{}(%
  $m$: integer)
}
\Global{$\phi$ array of $n$ integers imposed by weak-heap}

    \For {$l \leftarrow 0\ \mbox{\bf to}\ \floor{m/2}-1$} {
      $\phi_{m - odd(m)-l-1} \leftarrow \Dchild(\phi_l,m - odd(m))$\;
    }
    $k \leftarrow 1; e \leftarrow  2^k;
    c \leftarrow f \leftarrow 0$\;
    \While {$e < m$} {
      $k \leftarrow k+1$;
      $e \leftarrow 2e$\;
      $l \leftarrow \ceil{m/2} + f$;
      $f \leftarrow f + (t_k-t_{k-1})$\;
      \For {$i \leftarrow 0\ \mbox{\bf to}\ (t_k-t_{k-1})-1$} {
        $c \leftarrow c + 1$\;
        \If{$c = \ceil{m/2}$} {
          \Return\;
        }
	\If {$t_{k} > \ceil{m/2}-1$} { 
	  \Binaryinsert($i+1-odd(m),l,m-1$)\;
        }
	\Else { 
	  \Binaryinsert($\floor{m/2}-f+i,e-1,\floor{m/2}+f)$\;
        }
      }
}
\caption{Merging step in {\sc MergeInsertion} with $t_k = (2^{k+1}+(-1)^k)/3$ , $odd(m)= m\!\mod 2$, and \Dchild$(\phi_i,n)$ returns the highest index less than $n$ of a grandchild of $\phi_i$ in the weak heap (i.\,e, \Dchild$(\phi_i,n)=$ index of the bottommost element in the weak heap which has $\Dancestor{}= \phi_i$ and index $<n$).
\label{xxmerge}}
\end{algorithm}

\begin{theorem}[Average Case of {\sc MergeInsertion}]\label{thm:avgMergeIns}
The sorting algorithm {\sc MergeInsertion} needs $n\log n - c(n)\cdot
n + \Oh(\log n)$ comparisons on the average, where $c(n) \geq
1.3999$.
\end{theorem}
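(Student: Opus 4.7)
The plan is to adapt Knuth's worst-case analysis of MergeInsertion by replacing each binary-insertion cost with its average counterpart obtained in the proof of Proposition~1.

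Denote by $F(n)$ the average number of comparisons performed by MergeInsertion. The three phases of the algorithm yield the recurrence $F(n) = \lfloor n/2\rfloor + F(\lfloor n/2\rfloor) + G(n)$, where $\lfloor n/2\rfloor$ counts the initial pairing comparisons, $F(\lfloor n/2\rfloor)$ covers the recursive sort of the $a_i$, and $G(n)$ collects the total average cost of all binary insertions of the $b_i$ in phase~3. The task reduces to a sufficiently precise estimate of $G(n)$.

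For each element $b_i$ inserted in the $k$-th block, the Ford-Johnson ordering forces the binary search to operate on a prefix of length $M_{k,i} \le 2^k - 1$ of the main chain (this is the structural invariant that yields $k$ comparisons per insertion in the worst case). To recover Knuth's worst-case constant $-1.329$, one bounds each insertion by $k$. For the average case I would instead invoke Proposition~1: conditional on the prefix having length~$m$, the average number of comparisons of the binary insertion is $T(m+1) = \lceil \log(m+1)\rceil + 1 - 2^{\lceil \log(m+1)\rceil}/(m+1)$, which is strictly smaller than $k$ whenever $m+1 < 2^k$. Uniformity of the landing position of $b_i$ over the $m+1$ admissible slots is justified by the symmetry of the original uniform random permutation together with the fact that the renaming step of MergeInsertion does not inspect the values~$b_i$.

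Summing $T(M_{k,i}+1)$ over all $i$ and $k$ and plugging the result into the recurrence yields $F(n) = n\log n - c(n)\cdot n + \Oh(\log n)$. The hard part, where most of the real work will concentrate, is the quantitative extraction of the bound $c(n) \ge 1.3999$. This requires a careful accounting of the distribution of the prefix lengths $M_{k,i}$ (saturated at $2^k-1$ versus strictly smaller, the latter producing the savings relative to Knuth's worst-case count), followed by a numerical evaluation of the block-wise savings, analogous to the oscillation analysis of $\lceil\log N\rceil - \log N$ performed at the end of the proof of Proposition~1. The first and last blocks, being possibly incomplete (as reflected by the special case at the end of Algorithm~\ref{xxmerge}), should be handled as boundary cases and absorbed into the $\Oh(\log n)$ remainder.
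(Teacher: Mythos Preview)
Your overall plan---start from Knuth's worst-case count, replace the insertion costs by the averages from Proposition~1, and solve the resulting recurrence---is exactly the paper's route. But there is a genuine gap in your justification of uniformity, and it is precisely the point where the paper has to work hardest.

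You assert that the landing position of $b_i$ is uniform over the $m+1$ admissible slots ``by the symmetry of the original uniform random permutation together with the fact that the renaming step does not inspect the values~$b_i$.'' This is not correct. When $b_i$ is inserted, the main chain already contains previously inserted $b_j$'s, and the position of $a_i$ (which bounds the search range for $b_i$) within that chain depends on where those earlier $b_j$'s landed. The paper states this explicitly: ``the probability at which position an element is inserted is not uniformly distributed. However, it is monotonically increasing with the index in the array.'' The paper then supplies the missing argument: if the binary search is implemented so that at every odd split the \emph{larger} half goes to the side with the \emph{lower} probabilities, then a monotone distribution incurs at most the average cost of the uniform one. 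Only with this extra step does $T_{\mathrm{InsAvg}}$ become a valid upper bound on the insertion cost, and hence the derived $c(n)$ a valid lower bound. Your symmetry claim skips this, and without it the use of Proposition~1 is unjustified in the direction you need.

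A second, smaller point: in the basic variant that the theorem analyses, elements in \emph{complete} blocks always consume exactly $k$ comparisons; there is no saving there at all. The entire saving $S(n)$ relative to $W(n)$ comes from the single incomplete last block at each recursion level, where the search range has size $\lceil n/2\rceil + t_{k-1}-1 < 2^k - 1$. Your phrase ``saturated at $2^k-1$ versus strictly smaller'' is compatible with this, but your plan to ``sum $T(M_{k,i}+1)$ over all $i$ and $k$'' suggests you expect contributions from many insertions. In the paper the savings satisfy $S(n)\ge S(n/2)+f(n/2)$ with a single term $f$ per level; since $f(n/2)=f(n)/2$ this telescopes to $S(n)\approx f(n)$, and combining $f(n)/n$ with Knuth's worst-case oscillation $(3-\log 3)-(y+1-2^y)$ is what finally yields the numerical bound $c(n)\ge 1.3999$.
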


\begin{corollary}[{\sc QuickMergesort} with Base Case {\sc MergeInsertion}]\label{thm:QMSbaseMI}
When using {\sc MergeInsertion} as base case, {\sc
  QuickMergesort} needs at most $n \log n - 1.3999 n + o(n)$ comparisons
and $\Oh(n \log n)$ other instructions on the average.
\end{corollary}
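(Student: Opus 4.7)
The plan is to derive this corollary as a direct application of \prettyref{thm:QMSbase} with the base-case algorithm $\mathrm{Z}$ instantiated as \textsc{MergeInsertion}. To carry this out, I would verify the two hypotheses of that theorem for \textsc{MergeInsertion}: that it uses $n\log n + en + o(n)$ comparisons on average for some constant $e$, and that its remaining operations cost at most $\Oh(n^2)$.

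The comparison-count hypothesis is immediate from \prettyref{thm:avgMergeIns}, which gives an average of $n\log n - c(n)\cdot n + \Oh(\log n)$ comparisons with $c(n) \geq 1.3999$. Choosing $e = -1.3999$ yields the required form $n\log n + en + o(n)$: since $c(n) \geq 1.3999$ we have $-c(n)\cdot n \leq -1.3999\,n$, and the $\Oh(\log n)$ remainder is absorbed into $o(n)$.

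The $\Oh(n^2)$ bound on non-comparison work follows from the structure of \textsc{MergeInsertion}: as already observed in the introduction, binary insertion into an array makes $\Oh(n)$ element moves per insertion, summing to $\Oh(n^2)$ over the whole algorithm. The weak-heap-based bookkeeping and the indirect addressing used in our implementation do not increase this asymptotically. Moreover, the $n\log n + n$ extra bits required by the implementation reduce, on a base case of size $m \in \Oh(\log n)$, to $\Oh(\log n \cdot \log\log n)$ bits and hence $\Oh(\log n)$ machine words, so invoking \textsc{MergeInsertion} on base cases preserves the internal-space guarantee of \textsc{QuickMergesort}.

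Plugging these bounds into \prettyref{thm:QMSbase} with $e = -1.3999$ then gives at most $n\log n - 1.3999\,n + o(n)$ comparisons and $\Oh(n\log n)$ other instructions on average, which is the claim. I do not expect a real technical obstacle: the substantive work lies in \prettyref{thm:avgMergeIns} and in the proof of \prettyref{thm:QMSbase}, and this corollary only has to match constants and check the $\Oh(n^2)$ side-condition on non-comparison operations.
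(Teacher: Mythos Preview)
Your proposal is correct and matches the paper's approach: the corollary is stated without proof in the paper, being an immediate consequence of \prettyref{thm:QMSbase} applied with $Z=\textsc{MergeInsertion}$ and the constant $e=-1.3999$ supplied by \prettyref{thm:avgMergeIns}. Your additional remarks on the $\Oh(n^2)$ non-comparison work and the extra-bit space usage are sound and simply make explicit what the paper leaves implicit.
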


\begin{proof}[Proof of \prettyref{thm:avgMergeIns}]
According to Knuth \cite{Knu73}, {\sc MergeInsertion} requires at most $W(n) =
n \log n - (3- \log 3) n + n (y +1 - 2^y) +\Oh(\log n)$ comparisons
in the worst case, where $y =y(n)= \ceil{\log(3n/4)} - \log(3n/4) \in [0,1)$. 
In the following we want to analyze the average savings relative to the worst case. 
%
Therefore, let $F(n)$ denote the average number of comparisons of the insertion steps of {\sc MergeInsertion}, i.\,e., all comparisons minus the efforts for the weak heap construction, which always takes place. 
Then, we obtain the recurrence relation
\begin{eqnarray*}
F(n) &=& F(\floor{n/2}) + G(\ceil{n/2})\mbox{, with} \\
G(m) &=& (k_m - \alpha_m)\cdot (m - t_{k_m-1}) + \sum_{j=1}^{k_m-1} j \cdot (t_j - t_{j-1}),
\end{eqnarray*}
with $k_m$ such that $t_{k_m-1} \le m < t_{k_m}$ and some $\alpha_m \in [0,1]$. As we do not analyze the improved version of the algorithm, the insertion of elements with index less or equal $t_{k_m-1}$ requires always the same number of comparisons. Thus, the term $\sum_{j=1}^{k_m-1} j \cdot (t_j - t_{j-1})$ is independent of
the data. However, inserting an element after $t_{k_m-1}$ may either need $k_m$ or $k_m-1$ comparisons. This is where $\alpha_m$ comes from. Note that $\alpha_m$ only depends on $m$.
We split $F(n)$ into $F'(n) + F''(n)$ with 
\begin{align*}
&&F'(n) &= F'(\floor{n/2}) + G'(\ceil{n/2}) &&\mbox{and} \\
&&G'(m) &= (k_m - \alpha_m)\cdot (m - t_{k_m-1})&&\mbox{with $k_m$ such that $t_{k_m-1} \le m < t_{k_m}$},
\intertext{and}
&&F''(n) &= F''(\floor{n/2}) + G''(\ceil{n/2})&&\mbox{and} \\
&&G''(m) &= \sum_{j=1}^{k_m-1} j \cdot (t_j - t_{j-1}) &&\mbox{with $k_m$ such that $t_{k_m-1} \le m <  t_{k_m}$}.
\end{align*}

For the average case analysis, we have that $F''(n)$ is independent of
the data. For $n = (4/3)\cdot 2^k$ we have $G'(n)=0$, and hence, $F'(n)=0$. Since otherwise $G'(n)$ is non-negative, this proves that exactly for $n = (4/3)\cdot 2^k$ the average case matches the worst case.

Now, we have to estimate $F'(n)$ for arbitrary $n$.
We have to consider the
calls to binary insertion more closely.
To insert a new element into an array of $m-1$ elements either needs $\ceil{\log m}-1$ or $\ceil{\log m}$ comparisons. For a moment assume that the element is inserted at every position with the same probability. Under this assumption the analysis in the proof of \prettyref{prop:avgIns} is valid, which states that
 \begin{align*}
 T_{\mathrm{InsAvg}}(m) 
  &= \ceil{\log m } + 1 - \frac{2^{\ceil{\log m} }}{m}
 \end{align*}
 comparisons are needed on average.

The problem is that in our case the probability at which position an element is inserted is not uniformly distributed. However, it is monotonically increasing with the index in the array (indices as in our implementation). 
Informally speaking, this is because if an element is inserted further to the right, then for the following elements there are more possibilities to be inserted than if the element is inserted on the left.

Now, \Binaryinsert{} can be implemented such that for an odd number of positions the next comparison is made such that the larger half of the array is the one containing the positions with lower probabilities. (In our case, this is the part with the lower
indices~-- see \prettyref{fig:binarysearch}.)
That means the less probable positions lie on rather longer paths in the search tree, and hence, the average path length is better than in the uniform case.
%
Therefore, 
we may assume a uniform distribution in the following as an upper bound.

 In each
of the recursion steps we have $\ceil{n/2}-t_{k_{\ceil{n/2}}-1}$ calls to binary
insertion into sets of size $\ceil{n/2}+t_{k_{\ceil{n/2}}-1}-1$ elements each.
Hence, for inserting one element, the difference to the worst case is $\frac{2^{\ceil{\log \ceil{n/2}+t_{k_{\ceil{n/2}}-1}}}}{\ceil{n/2}+t_{k_{\ceil{n/2}}-1}} - 1$.
Summing up, we obtain for the average savings $S(n) = W(n) - (F(n)+ \text{\it weak-heap-construction}(n))$ w.\,r.\,t.\ the worst case number $W(n)$ the recurrence
$$S(n) \geq S(\floor{n/2}) + (\ceil{n/2} -t_{k_{\ceil{n/2}}-1}) \cdot \left(\frac{2^{\ceil{\log (\ceil{n/2}+t_{k_{\ceil{n/2}}-1})}}}{\ceil{n/2}+t_{k_{\ceil{n/2}}-1}} - 1\right).$$
For $m\in \R_{>0}$ we write $m = 2^{\ell_m -  \log 3 + x}$ with $x \in [0,1)$ and we set
$$f(m) = ( m - 2^{\ell_m -  \log 3})\cdot\left(\frac{2^{\ell_m}}{m + 2^{\ell_m -  \log 3}} -1\right).$$
Recall that we have $t_k = (2^{k+1}+(-1)^k)/3$. Thus, $k_m$ and $\ell_m$ coincide for most $m$ and differ by at most 1 for a few values where $m$ is close to $t_{k_m}$ or $t_{k_m-1}$.
Since 
in both cases $f(m)$ is smaller than some constant, 
this implies that $f(m)$ and $ ( m - t_{k_m-1})\cdot\left(\frac{2^{\ceil{\log( m+t_{k_m-1})}}}{m+t_{k_m-1}} - 1\right)$ differ by at most a constant. Furthermore, $f(m)$ and $f(m+1/2)$ differ by at most a constant. Hence, we have:
$$S(n) \geq S(n/2) + f(n/2) +  \Oh(1).$$
Since we have $f(n/2) = f(n) /2$, this resolves to 
$$S(n) \geq \sum_{i>0} f(n/2^i)+ \Oh(\log n)=  \sum_{i>0}  f(n) / 2^i + \Oh(\log n)= f(n)+ \Oh(\log n).$$
With $n = 2^{k -  \log 3 + x}$ this means up to $\Oh(\log n /n)$-terms
\begin{align*}
\frac{S(n)}{n} &\approx  \frac{ 2^{k -  \log 3 + x} - 2^{k -  \log 3}}{2^{k -  \log 3 + x}}\cdot\left(\frac{2^{k}}{2^{k -  \log 3 + x} + 2^{k -  \log 3}} -1\right)\\
    &=(1-2^{-x})\cdot \left(\frac{3 }{2^{  x} +1} -1\right).
\end{align*}
Writing $F(n)= n\log n - c(n)\cdot
n + \Oh(\log n)$ we obtain with \cite{Knu73}
$$ c(n) \geq - (F(n) - n \log n )/ n = (3- \log 3)- (y +1 - 2^y)    + S(n)/n,$$
where $y = \ceil{\log(3n/4)} - \log(3n/4) \in [0,1)$, i.\,e., $n = 2^{\ell - \log 3-y}$ for some $\ell \in \Z$. With $y = 1-x$ it follows
$$c(n) \geq (3- \log 3) - (1-x +1 - 2^{1-x}) + (1-2^{-x})\cdot\left(\frac{3 }{2^{  x} +1} -1\right) > 1.3999.$$
This function reaches its minimum in $[0,1)$ for $x = \log\left(\ln 8 -1+\sqrt{(1-\ln 8)^2-1}\right)$. 

It is not difficult to observe that $c(2^k) = 1.4$.
For the factor $e(n)$
in $n \log n - e(n) + \Oh(\log n)$ we have
$e(2^k) = 3 - \log 3 + (x +1 -2^x)$, where
$x = \ceil {\log (3/4) \cdot 2^k} - \log ((3/4) \cdot 2^k)$.
We know that $x$ can be rewritten as 
$x = \ceil{\log (3) + \log(2^k/4)} - (\log 3 + \log(2^k/4)
   = \ceil{\log 3} - \log 3 = 2 - \log 3$.
Hence, we have $e(n) = - 3 \log (3) + (3 \log (3) - 2 ^{2-\log(3)}) = -
4/3$.  Finally, we are interested in the value $W(n)-S(n)
 = W(2^k)-S(2^k) = -4/3 - 1/15 = -1.4$.
\end{proof}

\section{Experiments}
Our experiments consist of two parts. First, we compare the different algorithms we use as base cases, i.\,e., {\sc MergeInsertion}, its improved variant, and {\sc Insertionsort}. The results can be seen in \prettyref{fig:base_cases}. Depending on the size of the arrays the displayed numbers are averages over 10-10000 runs\footnote{Our experiments were run on one core of
an Intel Core i7-3770 CPU (3.40GHz, 8MB Cache) with 32GB RAM; Operating system:
Ubuntu Linux 64bit; Compiler: GNU's \texttt{g++} (version 4.6.3) optimized with flag
\texttt{-O3}.}. The data elements we sorted were randomly chosen 
64-bit integers\footnote{To rely on objects being
handled we avoided the flattening of the array structure 
by the compiler. Hence, for the running time experiments, 
and in each comparison taken, we left the counter 
increase operation intact.}.

The outcome in \prettyref{fig:base_cases} shows that our improved {\sc
  MergeInsertion} implementation achieves results for the
 constant $\kappa$ of the linear term in the range
of $[-1.43,-1.41]$ (for some values of $n$ are even smaller than
$-1.43$). Moreover, the standard implementation with slightly more comparisons
is faster than {\sc Insertionsort}.  By the $\Oh(n^2)$ work, the
resulting runtimes for all three implementations raises quickly,
so that only moderate values of $n$ can be handled.

\begin{figure}[ht]
\centerline{
\includegraphics[width=6.5cm,height=6.5cm]{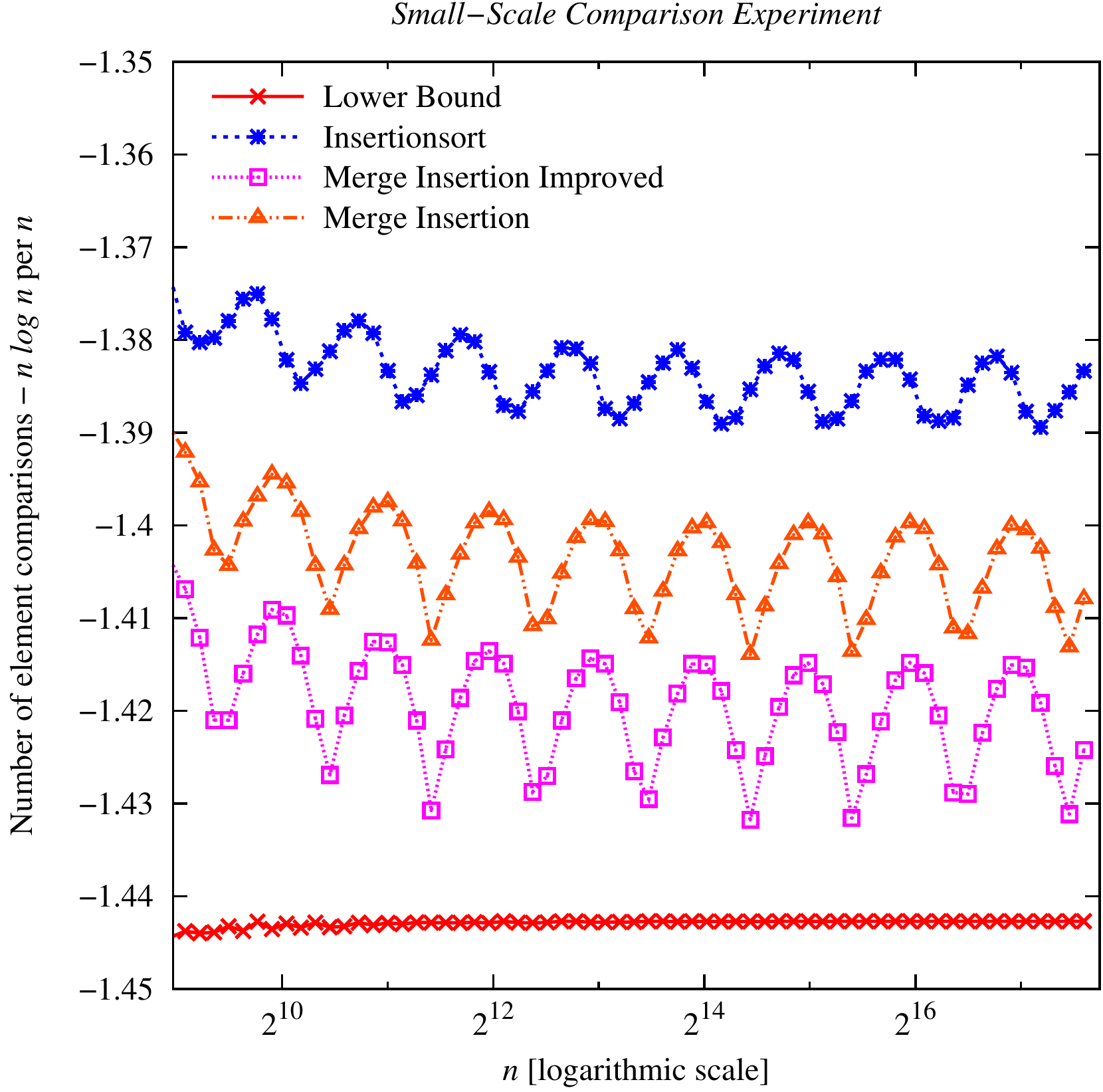}\hspace{3mm}
\includegraphics[width=6.5cm,height=6.5cm]{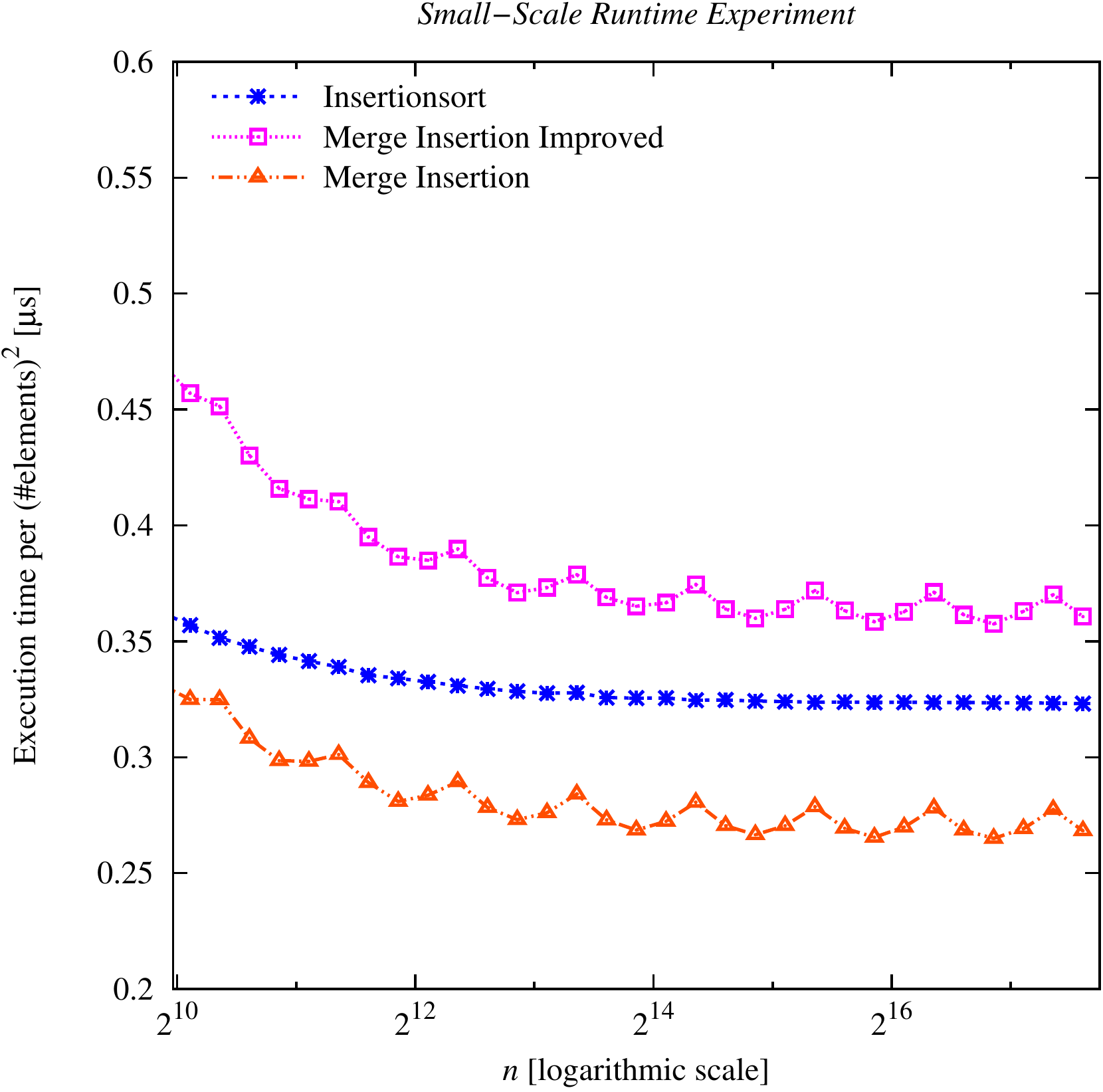}
}
\caption{Comparison of {\sc MergeInsertion}, its improved variant and {\sc Insertionsort}. For the number of comparisons  $n\log n +\kappa n$ the value of $\kappa$ is displayed. }\label{fig:base_cases}
\end{figure}

The second part of our experiments (shown in Fig.~\ref{fig:compare_QMS}) consists of the comparison of {\sc QuickMergesort} (with base cases of constant and growing size)
and {\sc QuickWeakHeapsort} with state-of-the-art algorithms as STL-{\sc Introsort} (i.\,e., {\sc Quicksort}), STL-{\sc stable-sort} (an implementation of {\sc Mergesort}) and {\sc Quicksort} with median of $\sqrt{n}$ elements for pivot selection. For {\sc QuickMergesort} with base cases, the improved variant of {\sc MergeInsertion} is used to sort subarrays of size up to $40 \log_{10} n$. For the normal {\sc QuickMergesort} we used base cases of size $\leq 9$.
We also implemented {\sc QuickMergesort} with median of three for pivot selection, which turns out to be practically efficient, although it needs slightly more comparisons than {\sc QuickMergesort} with median of $\sqrt{n}$. However, since also the larger half of the partitioned array can be sorted with {\sc Mergesort}, the difference to the median of $\sqrt{n}$ version is not as big as in  {\sc QuickHeapsort} \cite{DiekertW13Quick}.
As suggested by the theory, we see that our improved {\sc QuickMergesort} implementation with growing size base cases {\sc MergeInsertion} 
yields a result for the constant in the linear term that is 
in the range of $[-1.41,-1.40]$ -- close to the lower bound. 
 However, for the running time, normal {\sc QuickMergesort} as well as the STL-variants {\sc Introsort} (\texttt{std::sort}) and {\sc BottomUpMergesort} (\texttt{std::stable\_sort}) are slightly better.
With about 15\% the time gap, however, is not overly big, and may be bridged with 
additional efforts like skewed pivots and refined partitioning. Also, if comparisons are more expensive, {\sc QuickMergesort}
should perform significantly faster than {\sc Introsort}.

\begin{figure}[ht]
\centerline{
\includegraphics[width=6.5cm,height=6.5cm]{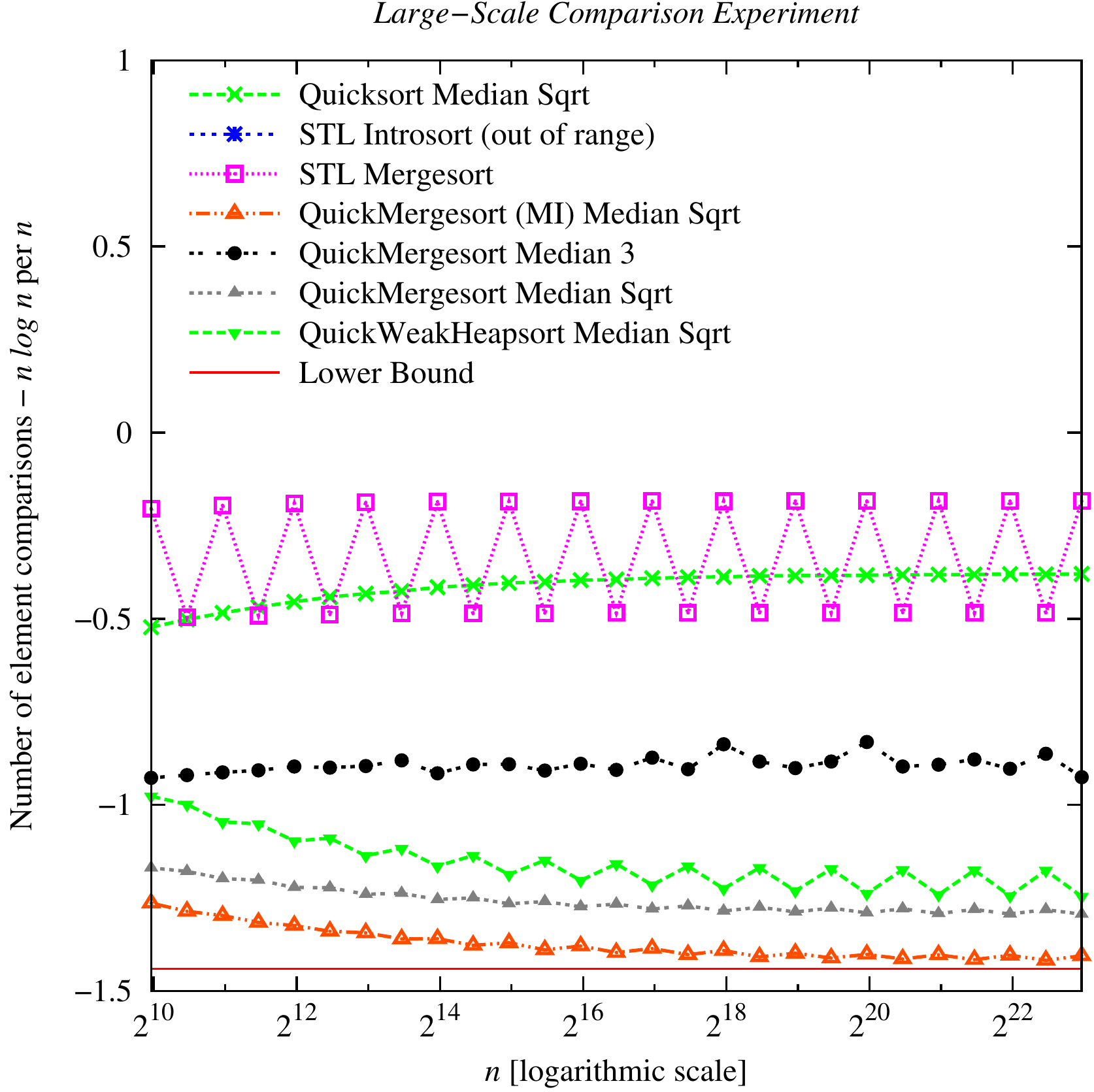}\hspace{3mm}
\includegraphics[width=6.5cm,height=6.5cm]{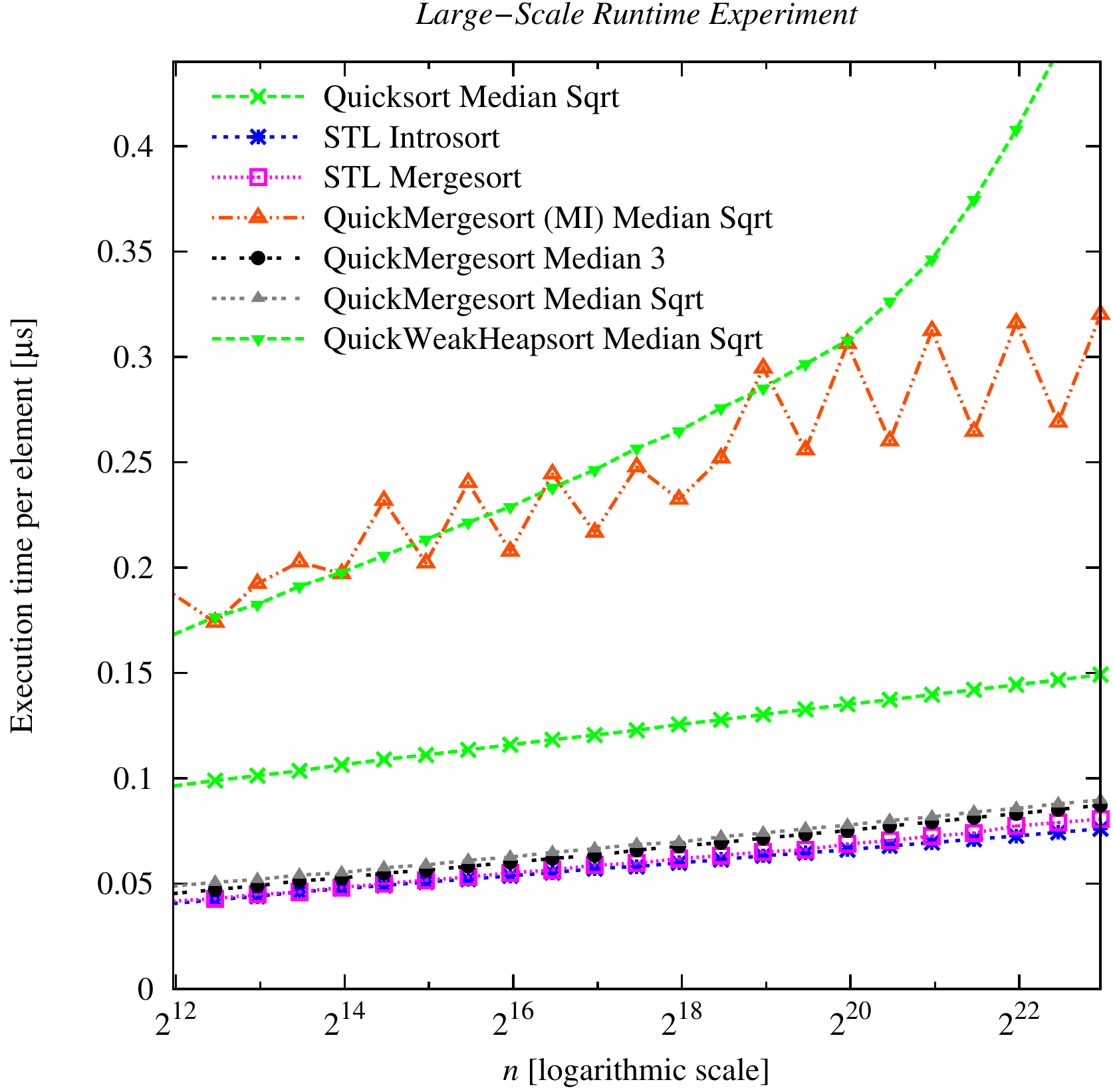}
}
\caption{{Comparison of \sc QuickMergesort} (with base cases of constant and growing size) and {\sc QuickWeakHeapsort} with other sorting algorithms; (MI) is short for 
including growing size base cases derived from {\sc MergeInsertion}. For  the number of comparisons  $n\log n +\kappa n$ the value of $\kappa$ is displayed. }\label{fig:compare_QMS}
\end{figure}

\section{Concluding Remarks}

Sorting $n$ elements remains a fascinating topic for computer
scientists both from a theoretical and from a practical point of
view. 
With {\sc QuickXsort} we have described a procedure how to convert an external sorting algorithm into an internal one introducing only $o(n)$ additional comparisons on average. We presented {\sc QuickWeakHeapsort} and {\sc QuickMergesort} as two examples for this construction.  {\sc QuickMergesort}  is close to the lower bound for the average number of comparisons and at the same time is practically efficient, even when the comparisons are fast.

Using {\sc MergeInsertion} to sort base cases of growing size for {\sc QuickMergesort}, we derive an 
an upper bound of
$n \log n - 1.3999n + o(n)$ comparisons for the average case. As far as we know a better result has not been published before.
Our experimental results validate the theoretical considerations and indicate that the factor $-1.43$ can be beaten. 
Of course, there is still room in closing the gap to the lower bound of
$n \log n - 1.44n + \Oh(\log n)$ comparisons.

\clearpage


\vfill
\pagebreak

\appendix

\section{Pseudocode}
\begin{algorithm}[hbt!]
\SetKwInput{Proc}{procedure}
\SetKwFor{For}{for}{}{}
\SetKwFor{While}{while}{}{}
\SetKw{Return}{return}
\DontPrintSemicolon
\Proc{\Construct{}($s$: array of elements,
  $r$: array of $n$ bits, 
  $m$ bound)}
\For {$k = m-1\ \mbox{\bf downto}\ 1$} {
  \If {$i+1 = k$} {
    \If {$\textit{even}(k)$} {
         $\Link (\Dancestor (i),i)$ \;
      }
      $k \leftarrow \lfloor k / 2 \rfloor$\;
    }
    \Else {
         $\Link (\Dancestor(i),i)$ \;
    }
  }
\;
\Proc{\Dancestor{}($j$: index)}
\While {$(j\mbox{ \emph{\textbf{bitand}} }1) = r_{\lfloor j / 2 \rfloor}$} 
{
     $j \leftarrow \lfloor j / 2 \rfloor$\;   
}
\mbox{\bf return} $\lfloor j / 2 \rfloor$\;
\mbox{ }\;
\;
\Proc{\Link{}($i,j$: indices)} 
    \If {$s_j < s_i$} {
      $\mbox{\em swap}(s_i, s_j)$\; 
      $r_j \leftarrow 1 - r_j$\;
    } 
\;
\Proc{\Dchild{}($i,j$: indices)} 
    $x \leftarrow \mathit{secondchild}(i)$\;
    \While {$\mathit{firstchild}(x) < j$} { 
      $x \leftarrow \mathit{firstchild}(x)$\;
    }
    \Return{$x$}\;
%
\caption{Constructing a weak heap for {\sc MergeInsertion}.
\label{tournament}} 
\end{algorithm}

\begin{algorithm}[hbt!]
\SetKwInput{Proc}{procedure}
\SetKwFor{For}{for}{}{}
\SetKwFor{While}{while}{}{}
\SetKw{Return}{return}
\DontPrintSemicolon
\Proc{\Binaryinsert{}(
  $s$: array of $n$ elements, 
  $\phi$: array of $n$ integers, 
  $r$: array of $n$ bits, 
  $f,d,t$ integers)}

    \For {$j = t\ \mbox{\bf downto}\ f+d+1$} {
         $\textit{swap} (\phi_{j-1},\phi_j)$ \;
    }
    $l \leftarrow f$\;
    $r \leftarrow f+d$\;
    \While {$l < r$} {
      $m \leftarrow (l+r)/2$\;
      \If {$s_{\phi_{f+d}} > s_{\phi_m}$} { 
        $l \leftarrow m+1$\;
      }
      \Else {
        $r \leftarrow m$\;
     }
   }
   \For {$j = f+d\ \mbox{\bf downto}\ l$} {
         $\textit{swap} (\phi_{j-1},\phi_j)$ \;
   }
\caption{Binary insertion of elements in {\sc MergeInsertion} 
algorithm.\label{fig:binarysearch}} 
\end{algorithm}

\begin{algorithm}[hbt!]
\SetKwInput{Proc}{procedure}
\SetKwFor{For}{for}{}{}
\SetKwFor{While}{while}{}{}
\SetKw{Return}{return}
\DontPrintSemicolon

\Proc{\Mergeinsertionrecursive{}(
  $s$: array of $n$ elements, 
  $\phi$: array of $n$ integers, 
  $r$: array of $n$ bits
)} 
\If {$k > 2$} {
\Mergeinsertionrecursive$(k \ \mathit{div}\ 2)$\;
\Merge$(k)$\;
}
\;
\Proc{\Mergeinsertion{}(
  $s$: array of $n$ elements, 
  $\phi$: array of $n$ integers, 
  $r$: array of $n$ bits
)} 
\;
\Construct$(n)$\;
\Mergeinsertionrecursive$(n)$\;
\caption{Main routine and recursive call for {\sc MergeInsertion} 
algorithm.\label{binaryinsertion}} 
\end{algorithm}

\end{document}